\journalname{Statistical Papers}
\newcommand{\longdash}[1][2em]{%
  \makebox[#1]{$\m@th\smash-\mkern-7mu\cleaders\hbox{$\mkern-2mu\smash-\mkern-2mu$}\hfill\mkern-7mu\smash-$}}
\newcommand{\omitskip}{\kern-\arraycolsep}
\newcommand{\linefill}{% a variation on \rightarrowfill
  {-}\mkern-7mu
  \cleaders\hbox{$\mkern-2mu-\mkern-2mu$}\hfill
  \mkern-7mu{-}%
}
\newcommand{\pref}[1]{(\ref{#1})}
\renewcommand{\Pr}{\mathbb{P}} % probability
\DeclareMathOperator{\EV}{\mathbb{E}} % expected value
\DeclareMathOperator{\Var}{\mathrm{Var}} % expected value
\newcommand{\abs}[1]{\left\vert#1\right\vert}
\newcommand{\indicator}[1]{{\mathbbm{1}_{\left\{#1\right\}}}}
\begin{document}

\title{Analytic moment and Laplace transform formulae for\texorpdfstring{\\}{}
the quasi-stationary distribution of the Shiryaev diffusion\texorpdfstring{\\}{}
on an interval\thanks{The effort of A.S.~Polunchenko was partially supported by the Simons Foundation via a Collaboration Grant in Mathematics under Award \#\,304574. The work of A.~Pepelyshev was partially supported by the Russian Foundation for Basic Research under Projects \#\#\,17-01-00267-a and 17-01-00161-a.}
}
%\subtitle{Do you have a subtitle?\\ If so, write it here}

\titlerunning{On quasi-stationarity of the Shiryaev diffusion}        % if too long for running head

\author{
Aleksey S. Polunchenko
        \and
Andrey Pepelyshev %etc.
}

\authorrunning{A.S. Polunchenko \and A. Pepelyshev} % if too long for running head

\institute{
    A.S. Polunchenko \at
    Department of Mathematical Sciences\\
    State University of New York at Binghamton \\
    Binghamton, New York 13902--6000, USA\\
    Tel.: +1-607-777-6906\\
    Fax: +1-607-777-2450\\
    \email{aleksey@binghamton.edu}
\and
    A. Pepelyshev \at
    Cardiff School of Mathematics, Cardiff, CF24 4AG, UK\\
    and\\
    St. Petersburg State University, St. Petersburg, Russia, 199034\\
    \email{pepelyshevan@cardiff.ac.uk}%
}

\date{Received: date / Accepted: date}
% The correct dates will be entered by the editor

\maketitle

\begin{abstract}
We derive analytic closed-form moment and Laplace transform formulae for the quasi-stationary distribution of the classical Shiryaev diffusion restricted to the interval $[0,A]$ with absorption at a given $A>0$.
\keywords{Laplace transform\and Markov diffusions\and Quasi-stationarity\and Shiryaev process\and Special functions\and Stochastic processes}
\subclass{60J60 \and 60J25}
\end{abstract}

%+-----------------------------------------------------------------------------------------------+%
\section{Introduction}
\label{sec:intro}

This work is an investigation into quasi-stationarity of the classical Shiryaev diffusion restricted to an interval. Specifically, the focus is on the solution $(R_{t}^{r})_{t\ge0}$ of the stochastic differential equation
\begin{equation}\label{eq:Rt_r-def}
dR_{t}^{r}
=
dt+R_{t}^{r}dB_t
\;\;
\text{with}
\;\;
R_{0}^{r}\coloneqq r\ge0
\;\;
\text{fixed},
\end{equation}
where $(B_t)_{t\ge0}$ is standard Brownian motion in the sense that $\EV[dB_t]=0$, $\EV[(dB_t)^2]=dt$, and $B_0=0$. The time-homogeneous Markov process $(R_{t}^{r})_{t\ge0}$ is an important particular version of the so-called generalized Shiryaev process. The latter has been first arrived at and studied by Prof. A.N. Shiryaev---hence the name---in his fundamental work~\cite{Shiryaev:SMD61,Shiryaev:TPA63} on quickest change-point detection. While interest to the Shiryaev process in the context of quickest change-point detection has never weakened (see, e.g.,~\cite{Pollak+Siegmund:B85,Shiryaev:Bachelier2002,Feinberg+Shiryaev:SD2006,Burnaev+etal:TPA2009,Polunchenko:SA2016,Polunchenko:SA2017a,Polunchenko:SA2017b,Polunchenko:TPA2017}), the process has received a great deal of attention in other areas as well, notably in mathematical finance (see, e.g.,~\cite{Geman+Yor:MF1993,Donati-Martin+etal:RMI2001,Linetsky:OR2004}) and in mathematical physics (see, e.g.,~\cite{Monthus+Comtet:JPhIF1994,Comtet+Monthus:JPhA1996}). It has also been considered in the literature on general stochastic processes (see, e.g.,~\cite{Wong:SPMPE1964,Yor:AAP1992,Donati-Martin+etal:RMI2001,Dufresne:AAP2001,Schroder:AAP2003,Peskir:Shiryaev2006,Polunchenko+Sokolov:MCAP2016,Polunchenko+etal:TPA2018}).

The particular version of the Shiryaev process $(R_{t}^{r})_{t\ge0}$ governed by equation~\eqref{eq:Rt_r-def} is of special importance and interest because it is the {\em only} version with probabilistically nontrivial behavior in the limit as $t\to+\infty$, exhibited in spite of the distinct martingale property $\EV[R_{t}^{r}-r-t]=0$ for all $t\ge0$ and $r\ge0$. Moreover, the process is convergent (as $t\to+\infty$) regardless of whether the state space is \begin{inparaenum}[(I)]\item the entire half-line $[0,+\infty)$ with no absorption on the interior; or\label{lst:GSR-QSD-case-I} \item the interval $[0,A]$ with absorption at a given level $A>0$; or\label{lst:GSR-QSD-case-II} \item the shortened half-line $[A,+\infty)$ also with absorption at $A>0$ given.\label{lst:GSR-QSD-case-III} \end{inparaenum} The case of a negative initial value $r$ was touched upon in~\cite{Peskir:Shiryaev2006}. Cases~\pref{lst:GSR-QSD-case-I},~\pref{lst:GSR-QSD-case-II}, and~\pref{lst:GSR-QSD-case-III} have all been considered in the literature, which we now briefly review.

Case~\pref{lst:GSR-QSD-case-I} is the easiest case. The asymptotic (as $t\to+\infty$) distribution of $(R_{t}^{r})_{t\ge0}$ in this case is known as the stationary distribution. Formally, the latter is defined as
\begin{equation}\label{eq:SR-StDist-def}
H(x)
\coloneqq
\lim_{t\to+\infty}\Pr(R_{t}^{r}\le x)
\;\;
\text{and}
\;\;
h(x)
\coloneqq
\dfrac{d}{dx}H(x),
\end{equation}
and it has already been found, e.g., in~\cite{Shiryaev:SMD61,Shiryaev:TPA63,Pollak+Siegmund:B85,Feinberg+Shiryaev:SD2006,Burnaev+etal:TPA2009,Polunchenko+Sokolov:MCAP2016}, to be the momentless distribution
\begin{equation}\label{eq:SR-StDist-answer}
H(x)
=
e^{-\tfrac{2}{x}}\indicator{x\ge0}
\;\;
\text{and}
\;\;
h(x)
=
\dfrac{2}{x^2}e^{-\tfrac{2}{x}}\indicator{x\ge0},
\end{equation}
which is an extreme-value Fr\'{e}chet-type distribution. Exact closed-form formulae for the distribution of $R_{t}^{r}$ for {\em any} given $t\ge0$ and $r\ge0$ can be found, e.g., in~\cite{Linetsky:OR2004,Polunchenko+Sokolov:MCAP2016}.

Cases~\pref{lst:GSR-QSD-case-II} and~\pref{lst:GSR-QSD-case-III} are fundamentally different from and far less understood than case~\pref{lst:GSR-QSD-case-I}, due to absorption at one of the boundaries. The corresponding asymptotic (as $t\to+\infty$) distributions are {\em quasi}-stationary distributions, i.e., stationary but conditional on extended survival. Formally, consider the stopping time
\begin{equation*}%\label{eq:T-GSR-def}
\mathcal{S}_{A}^{r}
\coloneqq
\inf\{t\ge0\colon R_{t}^{r}=A\}
\;\;
\text{such that}
\;\;
\inf\{\varnothing\}=+\infty,
\end{equation*}
where $R_{0}^{r}\coloneqq r\ge0$ and $A>0$ are fixed. The quasi-stationary distribution is defined as
\begin{equation}\label{eq:QSD-def}
Q_{A}(x)
\coloneqq
\lim_{t\to+\infty}\Pr(R_{t}^{r}\le x|\mathcal{S}_{A}^{r}>t)
\;\;
\text{and}
\;\;
q_A(x)
\coloneqq
\dfrac{d}{dx}Q_{A}(x),
\end{equation}
and it does depend on whether $r\in[0,A]$, which is case~\pref{lst:GSR-QSD-case-II}, or $r\in[A,+\infty)$, which is case~\pref{lst:GSR-QSD-case-III}, but the specific value of $r$ inside the state space of choice is irrelevant.

Case~\pref{lst:GSR-QSD-case-III} is arguably the least understood case. To the best of our knowledge, the first attempt to treat this case was made in~\cite[Section~7.8.2]{Collet+etal:Book2013} where the authors proved that not only does the quasi-stationary distribution exist for any $A>0$, but also that there is a whole parametric continuum of quasi-stationary distributions when $A$ is not sufficiently large. Further progress on this case was recently made in~\cite{Polunchenko+etal:TPA2018} where $Q_A(x)$ and $q_A(x)$ were, for the first time, found analytically for any $A>0$. It was also shown in~\cite{Polunchenko+etal:TPA2018} that the quasi-stationary distribution is unique whenever $A\ge A^{*}\approx1.265857361$ where $A^{*}$ is the solution of a certain transcendental equation. While case~\pref{lst:GSR-QSD-case-III} may be the least understood case, the focus of this work is entirely on case~\pref{lst:GSR-QSD-case-II}, which is discussed next along with the motivation.

Case~\pref{lst:GSR-QSD-case-II} is of importance in quickest change-point detection, and in this context, it was investigated in, e.g.,~\cite{Pollak+Siegmund:B85,Burnaev+etal:TPA2009,Polunchenko:SA2017a}. See also, e.g.,~\cite{Pollak+Siegmund:JAP1996,Linetsky:OR2004} and~\cite[Section~7.8.2]{Collet+etal:Book2013}. For example, it is known from~\cite{Pollak+Siegmund:B85,Pollak+Siegmund:JAP1996} that, expectedly, the limit of $Q_A(x)$, defined in~\eqref{eq:QSD-def}, as $A\to+\infty$ is $H(x)$, defined in~\eqref{eq:SR-StDist-def} and given by~\eqref{eq:SR-StDist-answer}; the convergence is from above, and is pointwise, at every $x\in[0,+\infty)$, i.e., at all continuity points of $H(x)$. Moreover, analytic closed-form formulae for $Q_A(x)$ and $q_A(x)$ were recently obtained in~\cite{Polunchenko:SA2017a}, apparently for the first time in the literature; see formulae~\eqref{eq:QSD-pdf-answer} and~\eqref{eq:QSD-cdf-answer} below. To boot, the distribution of $R_{t}^{r}$ conditional on no extinction prior to time $t>0$, for {\em any} given $t>0$ and $r\in[0,A)$ has been derived explicitly as well (see, e.g.,~\cite{Polunchenko:SA2016,Linetsky:OR2004}); this conditional distribution becomes the quasi-stationary distribution in the limit, as $t\to+\infty$. Due to its connection to quickest change-point detection, it is case~\pref{lst:GSR-QSD-case-II} that is of interest to this work, which is also motivated by quickest change-point detection. Notwithstanding all the headway made lately on case~\pref{lst:GSR-QSD-case-II}, gaps do remain, and this work seeks to fill some of these gaps in.

More precisely, the contribution of this work in relation to case~\pref{lst:GSR-QSD-case-II} is two-fold: \begin{inparaenum}[\itshape(a)]\item obtain exact closed-form moment formulae for the quasi-stationary distribution; and subsequently use the moment formulae to \item derive an exact formula (in different forms) for the Laplace transform of the quasi-stationary distribution. \end{inparaenum} The moment formulae are obtained as an extension of the effort made earlier in~\cite{Polunchenko:SA2017a} where the moment sequence was shown to satisfy a certain recurrence whose closed-form solution, at the time, seemed out of reach. This work ``runs that leg'' and solves the recurrence explicitly. This is done in the first half of Section~\ref{sec:formulae}, which is the main section of the present paper. The second half of Section~\ref{sec:formulae} is devoted to the computation of the Laplace transform in two different ways: first using the obtained moment formulae, and then also by solving a certain order-two ordinary differential equation that the Laplace transform of interest can be easily shown (see~\cite{Polunchenko:SA2017a}) to satisfy. Since nearly all of the formulae involve special functions, we conveniently preface Section~\ref{sec:formulae} and the derivations therein with Section~\ref{sec:nomenclature} which introduces the relevant special functions. Lastly, Section~\ref{sec:remarks} wraps up the entire paper with a few concluding remarks.

%+-----------------------------------------------------------------------------------------------+%
\section{Notation and nomenclature}
\label{sec:nomenclature}

For convenience we shall adapt the standard notation employed uniformly across mathematical literature. In particular, this applies to a host of special functions we shall deal with throughout the sequel. These functions, in their most common notation, are:
%Throughout the sequel we shall use the standard symbols $\mathbb{C}$, $\mathbb{R}$, and $\mathbb{N}$ to denote the sets of complex, real, and natural numbers, respectively. We shall also adapt the usual notation $\mathrm{i}\coloneqq\sqrt{-1}$ for the imaginary unit, so that for a complex number $z\in\mathbb{C}$ traditionally written as $z=x+\mathrm{i} y$, the notation $\Re(z)=x$ will refer to the real part $z$. More importantly, we shall need a host of special functions. These functions, in their most common notation, are:
%
\begin{enumerate}
    \setlength{\itemsep}{10pt}
    \setlength{\parskip}{0pt}
    \setlength{\parsep}{0pt}
    \item The Gamma function $\Gamma(z)$, $z\in\mathbb{C}$, frequently also referred to as the extension of the factorial to complex numbers, due to the property $\Gamma(n)=(n-1)!$ exhibited for $n\in\mathbb{N}$. See, e.g.,~\cite[Chapter~1]{Bateman+Erdelyi:Book1953v1}.
    \item The Pochhammer symbol, or the rising factorial, often notated as $(z)_n$ and defined for $z\in\mathbb{C}$ and $n\in\mathbb{N}\cup\{0\}$ as
    \begin{equation*}%\label{eq:Pochhammer-def}
        (z)_{n}
        \coloneqq
        \begin{cases}
        1,&\text{for $n=0$};\\
        z(z+1)\cdots(z+n-1),&\text{for $n\in\mathbb{N}$},
        \end{cases}
    \end{equation*}
    and it is of note that $(1)_n=n!$ for any $n\in\mathbb{N}\cup\{0\}$. See, e.g.,~\cite[pp.~16--18]{Srivastava+Karlsson:Book1985}. Also, observe that
    \begin{equation*}%\label{eq:PochhammerGamma-def}
    (z)_{n}
    =
    \dfrac{\Gamma(z+n)}{\Gamma(z)}
    \;
    \text{for}
    \;
    n\in\mathbb{N}\cup\{0\}
    \;
    \text{and}
    \;
    z\in\mathbb{C}\setminus\{0,-1,-2,\ldots\},
    \end{equation*}
    and if $z$ is a negative integer or zero, i.e., if $z=-k$ and $k\in\mathbb{N}\cup\{0\}$, then
    \begin{equation}\label{eq:Pochhammer-negint}
    (-k)_{n}
    =
    \begin{cases}
    \dfrac{(-1)^{n}\,k!}{(k-n)!},&\text{for $n=0,1,\ldots,k$};\\[2mm]
    0,&\text{for $n=k+1,k+2,\ldots$};
    \end{cases}
    \end{equation}
    cf.~\cite[p.~16--17]{Srivastava+Karlsson:Book1985}.
    \item The special case of the generalized hypergeometric function (see, e.g.,~\cite[Chapter~4]{Bateman+Erdelyi:Book1953v1}) with two numeratorial and two denominatorial parameters. The function, denoted as ${}_{2}F_{2}[z]$, is defined via the power series
    \begin{equation}\label{eq:2F2-function-def}
        {}_{2}F_{2}
        \left[
            \setlength{\arraycolsep}{0pt}% local assignment
            \setlength{\extrarowheight}{2pt}
            \begin{array}{@{} c@{{}{}} @{}}
            a_1,a_2
            \\[1ex]
            b_1,b_2
            \\[2pt]
            \end{array}
            \;\middle|\;
            z
        \right]
        \coloneqq
        \sum_{n=0}^{\infty}\dfrac{(a_1)_n\,(a_2)_n}{(b_1)_n\,(b_2)_n}\,\dfrac{z^{n}}{n!},
    \end{equation}
    where $b_1,b_2\not\in\{0,-1,-2,\ldots\}$ and $\abs{z}<+\infty$. See~\cite[p.~20]{Srivastava+Karlsson:Book1985}. It is of note that when only one of the numeratorial parameters $a_i$, $i=1,2$, is a negative integer or zero, then, in view of~\eqref{eq:Pochhammer-negint}, the power series on the right of~\eqref{eq:2F2-function-def} terminates, thereby turning the function ${}_{2}F_{2}[z]$ into a polynomial in $z$ of degree $-a_i$.
    \item The Whittaker $M$ and $W$ functions, traditionally denoted, respectively, as $M_{a,b}(z)$ and $W_{a,b}(z)$, where $a,b,z\in\mathbb{C}$. These functions were introduced by Whittaker~\cite{Whittaker:BAMS1904} as the fundamental solutions to the Whittaker differential equation. See, e.g.,~\cite{Slater:Book1960,Buchholz:Book1969}.
    \item The modified Bessel functions of the first and second kinds, conventionally denoted, respectively, as $I_{a}(z)$ and $K_{a}(z)$, where $a,z\in\mathbb{C}$; the index $a$ is referred to as the function's order. See~\cite[Chapter~7]{Bateman+Erdelyi:Book1953v2}. These functions form a set of fundamental solutions to the modified Bessel differential equation. The modified Bessel $K$ function is also known as the MacDonald function.
    \item The particular case of the generalized bivariate Kamp\'{e} de F\'{e}riet function
    \begin{equation}\label{eq:KdF-function-def}
        F\mathstrut_{2:0;0}^{0:2;1}
        \left[
            \setlength{\arraycolsep}{0pt}% local assignment
            \setlength{\extrarowheight}{2pt}
            \begin{array}{@{} c@{{}:{}} c@{;{}} c@{} @{}}
            \linefill & a_1,a_2\;\; & \;\; 1
            \\[1ex]
            b_1,b_2 & \linefill & \linefill
            \\[2pt]
            \end{array}
            \;\middle|\;
            xy,x
        \right]
        \coloneqq
        \sum_{i=0}^{\infty}\sum_{j=0}^{\infty}
        \dfrac{(a_{1})_{i}\,(a_{2})_{i}\,(1)_{j}}{(b_{1})_{i+j}\,(b_{2})_{i+j}}\dfrac{(xy)^{i} x^{j}}{i!j!},
    \end{equation}
    which is well-defined for $b_1,b_2\not\in\{0,-1,-2,\ldots\}$ and $\vert x\vert<+\infty$ and $\vert y\vert<+\infty$. See~\cite[p.~27]{Srivastava+Karlsson:Book1985}. The above $F\mathstrut_{2:0;0}^{0:2;1}[x,y]$ function was introduced in~\cite{Lavoie+Grondin:JMAA1994}, and is slightly more general than the original Kamp\'{e} de F\'{e}riet function proposed by Prof.~J. Kamp\'{e} de F\'{e}riet in~\cite{KampeDeFeriet:ASP1921}.
\end{enumerate}

%+-----------------------------------------------------------------------------------------------+%
\section{The formulae and discussion}
\label{sec:formulae}

As was mentioned in the introduction, the quasi-stationary distribution defined in~\eqref{eq:QSD-def} was recently expressed analytically in~\cite{Polunchenko:SA2016} through the Whittaker $W$ function. Specifically, it can be deduced from~\cite[Theorem~3.1]{Polunchenko:SA2016} that if $A>0$ is fixed and $\lambda\equiv\lambda_A>0$ is the smallest (positive) solution of the equation
\begin{equation}\label{eq:lambda-eqn}
W_{1,\tfrac{1}{2}\xi(\lambda)}\left(\dfrac{2}{A}\right)
=
0,
\end{equation}
where
\begin{equation}\label{eq:xi-def}
\xi(\lambda)
\coloneqq
\sqrt{1-8\lambda}
\;\;
\text{so that}
\;\;
\lambda
=
\dfrac{1}{8}\left(1-\big[\xi(\lambda)\big]^2\right),
\end{equation}
then the quasi-stationary probability density function (pdf) is given by
\begin{equation}\label{eq:QSD-pdf-answer}
q_A(x)
=
\dfrac{e^{-\tfrac{1}{x}}\,\dfrac{1}{x}\,W_{1,\tfrac{1}{2}\xi(\lambda)}\left(\dfrac{2}{x}\right)}{e^{-\tfrac{1}{A}}\,W_{0,\tfrac{1}{2}\xi(\lambda)}\left(\dfrac{2}{A}\right)}\indicator{x\in[0,A]},
\end{equation}
and the respective cumulative distribution function (cdf) is given by
\begin{equation}\label{eq:QSD-cdf-answer}
Q_A(x)
=
\begin{cases}
1,&\;\text{if $x\ge A$;}\\[2mm]
\dfrac{e^{-\tfrac{1}{x}}\,W_{0,\tfrac{1}{2}\xi(\lambda)}\left(\dfrac{2}{x}\right)}{e^{-\tfrac{1}{A}}\,W_{0,\tfrac{1}{2}\xi(\lambda)}\left(\dfrac{2}{A}\right)},&\;\text{for $x\in[0,A)$;}\\[8mm]
0,&\;\text{otherwise},
\end{cases}
\end{equation}
and $q_A(x)$ and $Q_A(x)$ are each a smooth function of $x\in[0,A]$ and $A>0$; observe that $q_A(A)=0$ as implied by~\eqref{eq:lambda-eqn}, \eqref{eq:xi-def}, and~\eqref{eq:QSD-pdf-answer}. The smoothness of $q_A(x)$ and $Q_A(x)$ is due to the fact that the Whittaker $W$ function on the right of~\eqref{eq:QSD-pdf-answer} and~\eqref{eq:QSD-cdf-answer} is an analytic function of its argument as well as of each of its two indices.
\begin{remark}\label{rem:xi-symmetry}
The definition~\eqref{eq:xi-def} of $\xi(\lambda)$ can actually be changed to $\xi(\lambda)\coloneqq -\sqrt{1-8\lambda}$ with no effect whatsoever on either equation~\eqref{eq:lambda-eqn}, or formulae~\eqref{eq:QSD-pdf-answer} and~\eqref{eq:QSD-cdf-answer}, i.e., all three are invariant with respect to the sign of $\xi(\lambda)$. This was previously pointed out in~\cite{Polunchenko:SA2017a}, and the reason for this $\xi(\lambda)$-symmetry is because equation~\eqref{eq:lambda-eqn} and formulae~\eqref{eq:QSD-pdf-answer} and~\eqref{eq:QSD-cdf-answer} each have $\xi(\lambda)$ present only as (double) the second index of the corresponding Whittaker $W$ function or functions involved, and the Whittaker $W$ function in general is known (see, e.g.,~\cite[Identity~(19),~p.~19]{Buchholz:Book1969}) to be an even function of its second index, i.e., $W_{a,b}(z)=W_{a,-b}(z)$.
\end{remark}

It is evident that equation~\eqref{eq:lambda-eqn} is a key ingredient of formulae~\eqref{eq:QSD-pdf-answer} and~\eqref{eq:QSD-cdf-answer}, and consequently, of all of the characteristics of the quasi-stationary distribution as well. As a transcendental equation, it can only be solved numerically, although to within any desired accuracy, as was previously done, e.g., in~\cite{Linetsky:OR2004,Polunchenko:SA2016,Polunchenko:SA2017a,Polunchenko:SA2017b}, with the aid of {\it Mathematica} developed by Wolfram Research: {\it Mathematica}'s
special functions capabilities have long proven to be superb. Yet, it is known (see~\cite{Linetsky:OR2004,Polunchenko:SA2016}) that for any fixed $A>0$, the equation has countably many simple solutions $0<\lambda_1<\lambda_2<\lambda_3<\cdots$, such that $\lim_{k\to+\infty}\lambda_k=+\infty$. All of them depend on $A$, but since we are interested only in the smallest one, we shall use either the ``short'' notation $\lambda$, or the more explicit $\lambda_A$ to emphasize the dependence on $A$. Also, it can be concluded from~\cite[p.~136~and~Lemma~3.3]{Polunchenko:SA2016} that $\lambda_A$ is a monotonically decreasing function of $A$ such that $\lim_{A\to+\infty}\lambda_A=0$, and more specifically $\lambda_A=A^{-1}+O(A^{-3/2})$.
\begin{remark}\label{rem:xi-complex-real}
Since $\lambda\equiv\lambda_{A}$ is monotonically decreasing in $A$, and such that $\lim_{A\to+\infty}\lambda_{A}=0$, one can conclude from~\eqref{eq:xi-def} that $\xi(\lambda)$ is either \begin{inparaenum}[\itshape(a)]\item purely imaginary (i.e., $\xi(\lambda)=\mathrm{i}\alpha$ where $\mathrm{i}\coloneqq\sqrt{-1}$ and $\alpha\in\mathbb{R}$) if $A$ is sufficiently small, or \item purely real and between 0 inclusive and 1 exclusive (i.e., $0\le \xi(\lambda)<1$) otherwise\end{inparaenum}. The borderline case is when $\xi(\lambda)=0$, i.e., when $\lambda_{A}=1/8$, and the corresponding critical value of $A$ is the solution $\tilde{A}>0$ of the equation $W_{1,0}(2/\tilde{A})=0$. A basic numerical calculation gives $\tilde{A}\approx10.240465$. Hence, if $A<\tilde{A}\approx10.240465$, then $\lambda_{A}>1/8$ so that $\xi(\lambda)$ is purely imaginary; otherwise, if $A\ge\tilde{A}\approx10.240465$, then $\lambda_{A}\in(0,1/8]$ so that $\xi(\lambda)$ is purely real and such that $\xi(\lambda)\in[0,1)$ with $\lim_{A\to+\infty}\xi(\lambda_{A})=1$.
\end{remark}

The asymptotics $\lambda_A=A^{-1}+O(A^{-3/2})$ was first established (in a more general form) in~\cite{Polunchenko:SA2017a} with the aid of Jensen's inequality applied to ascertain that the variance of the quasi-stationary distribution~\eqref{eq:QSD-pdf-answer}--\eqref{eq:QSD-cdf-answer} is strictly positive. This is an example of potential applications of the quasi-stationary distribution's {\em low-order} moments. We now recover the distribution's {\em entire} moment series.

%+-----------------------------------------------------------------------------------------------+%
\subsection{The moment series}
\label{ssec:moment-series}

Let $Z$ be a random variable sampled from the quasi-stationary distribution given by~\eqref{eq:QSD-pdf-answer} and~\eqref{eq:QSD-cdf-answer}. Let $\mathfrak{M}_{n}\coloneqq\EV[Z^{n}]$ denote the $n$-th moment of $Z$ for $n\in\mathbb{N}\cup\{0\}$; it is to be understood that $\mathfrak{M}_{0}\equiv1$ for any $A>0$, and that all other $\mathfrak{M}_{n}$'s actually do depend on $A$. For every fixed $A>0$, the series $\{\mathfrak{M}_{n}\}_{n\ge0}$ can be inferred from~\cite[Theorem~3.2,~p.~136]{Polunchenko:SA2017a} to satisfy the recurrence
\begin{equation}\label{eq:QSD-moments-recurrence}
\left(\dfrac{n(n-1)}{2}+\lambda\right)\mathfrak{M}_{n}+n\,\mathfrak{M}_{n-1}
=
\lambda A^{n},
\;\;
n\in\mathbb{N},
\end{equation}
with $\mathfrak{M}_{0}\equiv 1$; recall that $\lambda\equiv\lambda_{A}$ and $A$ are interconnected via equation~\eqref{eq:lambda-eqn}. While recurrence~\eqref{eq:QSD-moments-recurrence} may seem easy to iterate forward on a computer, a general closed-form expression for $\mathfrak{M}_{n}$ for {\em any} $n\in\mathbb{N}\cup\{0\}$ would be more convenient, especially for analytic purposes. To that end, it was lamented in~\cite{Polunchenko:SA2017a} that although the recurrence is possible to solve explicitly, the solution is too cumbersome. We now show that the solution can be expressed compactly through the hypergeometric function ${}_{2}F_{2}[z]$ defined in~\eqref{eq:2F2-function-def}.
\begin{lemma}
For every $A>0$ fixed, the solution $\{\mathfrak{M}_{n}\}_{n\ge0}$ to the recurrence~\eqref{eq:QSD-moments-recurrence} is given by
\begin{equation}\label{eq:QSD-Mn-answer-2F2}
\mathfrak{M}_{n}
=
\dfrac{2\lambda A^{n}}{n(n-1)+2\lambda}
{}_{2}F_{2}
\left[
    \setlength{\arraycolsep}{0pt}% local assignment
    \setlength{\extrarowheight}{2pt}
    \begin{array}{@{} c@{{}{}} @{}}
    1,-n
    \\[1ex]
    \dfrac{3}{2}+\dfrac{\xi(\lambda)}{2}-n,\dfrac{3}{2}-\dfrac{\xi(\lambda)}{2}-n
    \\[5pt]
    \end{array}
    \;\middle|\;
    \dfrac{2}{A}
\right],
\;
n\in\mathbb{N}\cup\{0\},
\end{equation}
where $\lambda\equiv\lambda_A\;(>0)$ is determined by~\eqref{eq:lambda-eqn} while $\xi(\lambda)$ is defined in~\eqref{eq:xi-def}; recall also that ${}_{2}F_{2}[z]$ denotes the generalized hypergeometric function~\eqref{eq:2F2-function-def}.
\end{lemma}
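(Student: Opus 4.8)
The plan is to verify directly that the closed form on the right of~\eqref{eq:QSD-Mn-answer-2F2} satisfies both the recurrence~\eqref{eq:QSD-moments-recurrence} and the initial condition $\mathfrak{M}_0=1$. Since $\lambda>0$, the coefficient $\tfrac{n(n-1)}{2}+\lambda$ multiplying $\mathfrak{M}_{n}$ is strictly positive for every $n\in\mathbb{N}$, so~\eqref{eq:QSD-moments-recurrence} can always be solved for $\mathfrak{M}_{n}$ in terms of $\mathfrak{M}_{n-1}$; hence the recurrence together with $\mathfrak{M}_0=1$ determines the sequence $\{\mathfrak{M}_{n}\}_{n\ge0}$ uniquely, and it suffices to confirm that the proposed formula obeys both. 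For the initial condition, note that when $n=0$ one numeratorial parameter of the ${}_{2}F_{2}$ function equals $-n=0$, whence by~\eqref{eq:Pochhammer-negint} the defining series~\eqref{eq:2F2-function-def} collapses to its $k=0$ term, namely $1$; combined with the prefactor $2\lambda/(0+2\lambda)=1$ this yields $\mathfrak{M}_0=1$. Before proceeding I would also check well-definedness, i.e.\ that the denominatorial parameters $\tfrac{3}{2}\pm\tfrac{\xi(\lambda)}{2}-n$ never lie in $\{0,-1,-2,\ldots\}$; this is immediate from Remark~\ref{rem:xi-complex-real}, since $\xi(\lambda)$ is either purely imaginary or real with $0\le\xi(\lambda)<1$, so $\tfrac{3}{2}\pm\tfrac{\xi(\lambda)}{2}$ is never an integer.

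The main computation is to substitute~\eqref{eq:QSD-Mn-answer-2F2} into~\eqref{eq:QSD-moments-recurrence}. Writing $b_{\pm}\coloneqq\tfrac{3}{2}\pm\tfrac{\xi(\lambda)}{2}$ and using $[\xi(\lambda)]^2=1-8\lambda$ from~\eqref{eq:xi-def}, one obtains the two convenient identities $b_{+}+b_{-}=3$ and $b_{+}b_{-}=2(1+\lambda)$, from which
\begin{equation*}
(b_{+}-n)(b_{-}-n)=n^2-3n+2+2\lambda=(n-1)(n-2)+2\lambda,
\end{equation*}
which is precisely the quantity in the prefactor of $\mathfrak{M}_{n-1}$. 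Denote by $G_{n}$ the ${}_{2}F_{2}$ series in~\eqref{eq:QSD-Mn-answer-2F2}; since $(1)_k=k!$ cancels the $k!$ in~\eqref{eq:2F2-function-def}, one has the terminating sum $G_{n}=\sum_{k=0}^{n}(-n)_k\big[(b_{+}-n)_k(b_{-}-n)_k\big]^{-1}(2/A)^{k}$. The first term of the recurrence then simplifies neatly, because $\tfrac{n(n-1)}{2}+\lambda=\tfrac12\big(n(n-1)+2\lambda\big)$ cancels the prefactor denominator, giving $\big(\tfrac{n(n-1)}{2}+\lambda\big)\mathfrak{M}_{n}=\lambda A^{n}G_{n}$. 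Consequently the recurrence is equivalent to the single identity $\lambda A^{n}(G_{n}-1)=-n\,\mathfrak{M}_{n-1}$, which is all that remains to be shown.

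This remaining identity I would establish by a term-by-term index shift. In $G_{n}-1$ only the terms $k=1,\ldots,n$ survive; substituting $k=j+1$ and applying the Pochhammer shifts $(-n)_{j+1}=(-n)\,(-(n-1))_{j}$ and $(b_{\pm}-n)_{j+1}=(b_{\pm}-n)\,(b_{\pm}-n+1)_{j}$ extracts the constant $(-n)\big[(b_{+}-n)(b_{-}-n)\big]^{-1}$ and leaves exactly the series $G_{n-1}$, whose denominatorial parameters are $b_{\pm}-(n-1)=b_{\pm}-n+1$. Absorbing the powers of the argument via $A\,(2/A)^{k}=2\,(2/A)^{k-1}$ and invoking $(b_{+}-n)(b_{-}-n)=(n-1)(n-2)+2\lambda$ to match the prefactor of $\mathfrak{M}_{n-1}$, the left-hand side $\lambda A^{n}(G_{n}-1)$ collapses precisely to $-n\,\mathfrak{M}_{n-1}$, completing the verification. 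The step I expect to demand the most care is this bookkeeping of the index shift: one must track the single extracted factor $-n$, confirm that the product of the two shifted denominatorial Pochhammer symbols reproduces the prefactor denominator $(n-1)(n-2)+2\lambda$ of $\mathfrak{M}_{n-1}$, and ensure the summation range transforms correctly from $k\in\{1,\ldots,n\}$ to $j\in\{0,\ldots,n-1\}$; the rest is routine algebra.
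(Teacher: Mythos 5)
Your proposal is correct, and its overall skeleton matches the paper's: both arguments verify that the closed form satisfies the recurrence~\eqref{eq:QSD-moments-recurrence} together with $\mathfrak{M}_0=1$, and both begin by factoring out the prefactor $2\lambda A^{n}/[n(n-1)+2\lambda]$ so that the problem reduces to a single functional relation between consecutive ${}_{2}F_{2}$ factors (your $\lambda A^{n}(G_{n}-1)=-n\,\mathfrak{M}_{n-1}$ is exactly the paper's relation $-(n+1)\tfrac{2}{A}m(n,A)+[n(n-1)+2\lambda][1-m(n+1,A)]=0$ with the index shifted by one). Where you genuinely diverge is in how that relation is established. The paper recognizes it as a particular case of a known contiguous-function identity for ${}_{2}F_{2}$, cited from the literature with a specific parameter substitution, together with the observation that a ${}_{2}F_{2}$ with a zero numeratorial parameter equals $1$. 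You instead prove the needed instance from scratch: since one numeratorial parameter is $-n$, the series terminates, and the reindexing $k=j+1$ with the Pochhammer shifts $(-n)_{j+1}=(-n)(-(n-1))_{j}$ and $(b_{\pm}-n)_{j+1}=(b_{\pm}-n)(b_{\pm}-n+1)_{j}$, combined with $(b_{+}-n)(b_{-}-n)=(n-1)(n-2)+2\lambda$, collapses $G_{n}-1$ onto $G_{n-1}$. Your route is more elementary and self-contained — it needs no external identity, and all sums are finite so there are no convergence concerns — at the cost of carrying out the index bookkeeping by hand; the paper's route is shorter once the contiguous identity is accepted. You also make explicit two points the paper leaves implicit: that $\lambda>0$ guarantees the recurrence determines the sequence uniquely (so verification suffices), and that the denominatorial parameters $\tfrac{3}{2}\pm\tfrac{\xi(\lambda)}{2}-n$ are never nonpositive integers (by Remark~\ref{rem:xi-complex-real}), so the ${}_{2}F_{2}$ in~\eqref{eq:QSD-Mn-answer-2F2} is well defined. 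These additions strengthen rather than merely lengthen the argument.
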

\begin{proof}
The idea is to first rewrite~\eqref{eq:QSD-moments-recurrence} equivalently as
\begin{equation*}
\big[n(n+1)+2\lambda\big]\mathfrak{M}_{n+1}+2(n+1)\,\mathfrak{M}_{n}
=
2\lambda A^{n+1},
\end{equation*}
and then substitute $\mathfrak{M}_{n}$ of the form
\begin{equation*}
\mathfrak{M}_{n}
=
\dfrac{2\lambda A^{n}}{n(n-1)+2\lambda}\,m(n,A),
\end{equation*}
where $m(n,A)$ is the new unknown. After some elementary algebra this gives
\begin{equation*}
-(n+1)\dfrac{2}{A}m(n,A)+\big[n(n-1)+2\lambda\big]\big[1-m(n+1,A)\big]
=
0,
\end{equation*}
which can be recognized as a particular case of the contiguous function identity
\begin{equation*}
\begin{split}
(b-a)z\,{}_{2}F_{2}
&
\left[
    \setlength{\arraycolsep}{0pt}% local assignment
    \setlength{\extrarowheight}{2pt}
    \begin{array}{@{} c@{{}{}} @{}}
    a+1,b+1
    \\[1ex]
    c+1,d+1
    \\[2pt]
    \end{array}
    \;\middle|\;
    z
\right]
+
cd\left({}_{2}F_{2}\left[
    \setlength{\arraycolsep}{0pt}% local assignment
    \setlength{\extrarowheight}{2pt}
    \begin{array}{@{} c@{{}{}} @{}}
    a,b+1
    \\[1ex]
    c,d
    \\[2pt]
    \end{array}
    \;\middle|\;
    z
\right]
-
{}_{2}F_{2}
\left[
    \setlength{\arraycolsep}{0pt}% local assignment
    \setlength{\extrarowheight}{2pt}
    \begin{array}{@{} c@{{}{}} @{}}
    a+1,b
    \\[1ex]
    c,d
    \\[2pt]
    \end{array}
    \;\middle|\;
    z
\right]\right)
=
0,
\end{split}
\end{equation*}
that the function ${}_{2}F_{2}[z]$ defined in~\eqref{eq:2F2-function-def} is known to satisfy: it suffices to set
\begin{equation*}
a\coloneqq0,
\;
b\coloneqq-n-1,
\;
c\coloneqq-\dfrac{1}{2}+\dfrac{\xi(\lambda)}{2}-n,
\;
d\coloneqq-\dfrac{1}{2}-\dfrac{\xi(\lambda)}{2}-n,
\;
\text{and}
\;
z\coloneqq\dfrac{2}{A},
\end{equation*}
and observe directly from~\eqref{eq:2F2-function-def} that
\begin{equation*}
{}_{2}F_{2}
\left[
    \setlength{\arraycolsep}{0pt}% local assignment
    \setlength{\extrarowheight}{2pt}
    \begin{array}{@{} c@{{}{}} @{}}
    0,a_2
    \\[1ex]
    b_1,b_2
    \\[2pt]
    \end{array}
    \;\middle|\;
    z
\right]
=
1,
\end{equation*}
for any appropriate $a_2$, $b_1$ and $b_2$.
\qed
\end{proof}

It is clear that the obtained formula~\eqref{eq:QSD-Mn-answer-2F2} is symmetric with respect to $\xi(\lambda)$, as it should be, by Remark~\ref{rem:xi-symmetry}. More importantly, since one of the numeratorial parameters of the function ${}_{2}F_{2}[z]$ on the right of~\eqref{eq:QSD-Mn-answer-2F2} is from the set $\{0,-1,-2,-3,\ldots\}$, the power series buried inside the generalized hypergeometric function terminates, so that $\mathfrak{M}_{n}$ ends up being a polynomial of degree $n$ in $A$. However, the coefficients of the polynomial do depend on $\lambda\equiv\lambda_A$, and since the latter is connected to $A$ via the transcendental equation~\eqref{eq:lambda-eqn}, the actual nature of dependence of $\mathfrak{M}_{n}$ on $A$ is more complicated than polynomial. Specifically, from~\eqref{eq:Pochhammer-negint},~\eqref{eq:2F2-function-def}, and the identity
\begin{equation*}
(z)_{n-k}
=
\dfrac{(-1)^{k}\,(z)_{n}}{(1-z-n)_{k}}
,
\;\;
k=0,1,2,\ldots,n,
\end{equation*}
as given, e.g., by~\cite[Formula~(10),~p.~17]{Srivastava+Karlsson:Book1985}, we readily obtain
\begin{equation*}
\begin{split}
{}_{2}F_{2}
\left[
    \setlength{\arraycolsep}{0pt}% local assignment
    \setlength{\extrarowheight}{2pt}
    \begin{array}{@{} c@{{}{}} @{}}
    1,-n
    \\[1ex]
    a-n,b-n
    \\[5pt]
    \end{array}
    \;\middle|\;
    z
\right]
&=
\sum_{k=0}^{n}
\dfrac{(1)_{k}\,(-n)_{k}}{(a-n)_{k}\,(b-n)_{k}}\dfrac{z^{k}}{k!}
\\
&
=
\sum_{k=0}^{n}
\dfrac{(-n)_{k}}{(a-n)_{k}\,(b-n)_{k}} z^{k}
\\
&
=
n!
\sum_{k=0}^{n}
\dfrac{1}{(a-n)_{k}\,(b-n)_{k}}\dfrac{(-1)^{k}}{(n-k)!}\,z^{k}
\\
&
=
n!
\sum_{k=0}^{n}
(-1)^{2k}\dfrac{(1-a)_{n-k}\,(1-b)_{n-k}}{(1-a)_{n}\,(1-b)_{n}}\dfrac{(-z)^{k}}{(n-k)!}
\\
&
=
\dfrac{n!\,(-z)^{n}}{(1-a)_{n}\,(1-b)_{n}}
\sum_{k=0}^{n}(1-a)_{k}\,(1-b)_{k}\dfrac{(-z)^{-k}}{k!},
\end{split}
\end{equation*}
whence
\begin{equation}\label{eq:Mn-2F2-sum-form}
\begin{split}
{}_{2}F_{2}
&
\left[
    \setlength{\arraycolsep}{0pt}% local assignment
    \setlength{\extrarowheight}{2pt}
    \begin{array}{@{} c@{{}{}} @{}}
    1,-n
    \\[1ex]
    \dfrac{3}{2}+\dfrac{\xi(\lambda)}{2}-n,\dfrac{3}{2}-\dfrac{\xi(\lambda)}{2}-n
    \\[5pt]
    \end{array}
    \;\middle|\;
    \dfrac{2}{A}
\right]
=
\\
&\qquad\qquad
=
\dfrac{(-2)^{n}\,n!\,A^{-n}}{\left(-\dfrac{1}{2}+\dfrac{\xi(\lambda)}{2}\right)_{n}\left(-\dfrac{1}{2}-\dfrac{\xi(\lambda)}{2}\right)_{n}}
\times
\\
&\qquad\qquad\qquad
\times
\sum_{k=0}^{n}\left(-\dfrac{1}{2}+\dfrac{\xi(\lambda)}{2}\right)_{k}\left(-\dfrac{1}{2}-\dfrac{\xi(\lambda)}{2}\right)_{k}\dfrac{1}{k!}\left(-\dfrac{A}{2}\right)^{k},
\end{split}
\end{equation}
and subsequently, in view of~\eqref{eq:QSD-Mn-answer-2F2}, we finally find
\begin{equation}\label{eq:QSD-Mn-answer-PowerSeries}
\begin{split}
\mathfrak{M}_{n}
&=
\dfrac{(-2)^{n}\,n!}{\left(\dfrac{1}{2}+\dfrac{\xi(\lambda)}{2}\right)_{n}\left(\dfrac{1}{2}-\dfrac{\xi(\lambda)}{2}\right)_{n}}
\times
\\
&\qquad
\times
\sum_{k=0}^{n}\left(-\dfrac{1}{2}+\dfrac{\xi(\lambda)}{2}\right)_{k}\left(-\dfrac{1}{2}-\dfrac{\xi(\lambda)}{2}\right)_{k}\dfrac{1}{k!}\left(-\dfrac{A}{2}\right)^{k},
\;
n\in\mathbb{N}\cup\{0\},
\end{split}
\end{equation}
where again $\lambda\equiv\lambda_A\;(>0)$ is determined by~\eqref{eq:lambda-eqn} and $\xi(\lambda)$ is defined in~\eqref{eq:xi-def}; this formula is also invariant with respect to the sign of $\xi(\lambda)$.

Let us now briefly contrast the two obtained formulae~\eqref{eq:QSD-Mn-answer-2F2} and~\eqref{eq:QSD-Mn-answer-PowerSeries}. To this end, observe first that formula~\eqref{eq:QSD-Mn-answer-PowerSeries} is more explicit than formula~\eqref{eq:QSD-Mn-answer-2F2}: unlike the latter, the former is free of special functions, and can thus provide more insight into the relationship between $\mathfrak{M}_{n}$ and $A$. A better understanding of this relationship can, in turn, shed more light on the relationship between $\lambda\equiv\lambda_A$ and $A$, an important question difficult to answer by direct analysis of the transcendental equation~\eqref{eq:lambda-eqn} connecting the two. For example, from~\eqref{eq:QSD-Mn-answer-PowerSeries} and the trivial observation that $\mathfrak{M}_{n}>0$ for all $n$ we readily obtain
\begin{equation*}%\label{asd}
\mathfrak{M}_{1}
=
A-\dfrac{1}{\lambda_{A}}>0
\;\;
\text{and}
\;
\Var[Z]
=
\mathfrak{M}_{2}-\mathfrak{M}_{1}^{2}
=
\dfrac{\lambda_{A}-(A\lambda_{A}-1)^2}{\lambda_{A}^{2}(1+\lambda_{A})}
>
0,
\end{equation*}
whence
\begin{equation}\label{eq:lambda-double-ineq}
\dfrac{1}{A}
<
\lambda_{A}
<
\dfrac{1}{A}+\dfrac{1+\sqrt{4A+1}}{2A^2}
\;\;
\text{for any}
\;\;
A>0,
\end{equation}
so that $\lambda_{A}=A^{-1}+O(A^{-3/2})$; cf.~\cite{Polunchenko:SA2017a}. For applications of this result in quickest change-point detection see~\cite{Polunchenko:TPA2017,Polunchenko:SA2017b}. Similarly, since the quasi-stationary distribution is supported on the interval $[0,A]$, we may further deduce that $\mathfrak{M}_{n}\le A^{i}\, \mathfrak{M}_{n-i}$ for any $i\in\{0,1,\ldots,n\}$ and $n\in\mathbb{N}\cup\{0\}$. For $n=2$ and $i=1$, after some elementary algebra, this leads to the lower-bound
\begin{equation*}
\dfrac{1}{A}+\dfrac{1}{A+A^2}<\lambda_A
\;\;
\text{for any}
\;\;
A>0,
\end{equation*}
which clearly improves the left half of the double inequality~\eqref{eq:lambda-double-ineq}. By ``playing around'' with the moments more, one can tighten up the lower- and upper-bounds for $\lambda_{A}$ even further, although every such improvement will come at the price of increased complexity of the bounds. That said, the bounds will remain fully amenable to numerical evaluation. See~\cite{Polunchenko:SA2017a} for very accurate high-order bounds.

On the other hand, formula~\eqref{eq:QSD-Mn-answer-2F2} is more convenient than formula~\eqref{eq:QSD-Mn-answer-PowerSeries} to implement in software, especially in Wolfram {\it Mathematica} with its excellent special functions capabilities. To illustrate this point, we implemented formula~\eqref{eq:QSD-Mn-answer-2F2} in a {\it Mathematica} script, and used the script to produce Figures~\ref{fig:Mn_mu1_n_vs_A} and~\ref{fig:Mn_mu1_A_vs_n} which show the behavior of $\mathfrak{M}_{n}$ as a function of $A$ with $n$ fixed and as a function of $n$ with $A$ fixed, respectively; note the different ordinate scales in the figures. Figures~\ref{sfig:Mn_mu1_n1_vs_A}--\ref{sfig:Mn_mu1_n10_vs_A} make it clear that if $n$ is fixed, then $\mathfrak{M}_{n}$ is an increasing function of $A$, concave for $n=1$ and convex otherwise. Given the definition of $\mathfrak{M}_{n}$, the increasing nature of its dependence on $A$ is in alignment with one's intuition. The concavity of the $\mathfrak{M}_{n}$-vs-$A$ curve for $n=1$ and its convexity for $n\ge2$ is due to the aforementioned asymptotics $\lambda_A=A^{-1}+O(A^{-3/2})$, implying $\lim_{A\to+\infty}\big(\lambda_{A}\,A\big)=1$ but $\lim_{A\to+\infty}\big(\lambda_{A}^{1+\kappa}A\big)=0$ for any $\kappa>0$; cf.~\cite{Polunchenko:SA2017a,Polunchenko:SA2017b}. The dependence of $\mathfrak{M}_{n}$ on $n$ for a fixed $A$ has its nuances too: as can be seen from Figures~\ref{sfig:Mn_mu1_A1_vs_n}--\ref{sfig:Mn_mu1_A50_vs_n}, if $A$ is sufficiently small (as in around 1 or even less), then $\mathfrak{M}_n$ is a decreasing function of $n$, and otherwise $\mathfrak{M}_{n}$ is an increasing function of $n$. This is essentially because $f(x)\coloneqq a^{x}$ with $a>0$ is an increasing function of $x$ for $a>1$, and is a decreasing function for $a\in(0,1)$. It is also noteworthy that the rate of growth (or, correspondingly, the rate of decay) of $\mathfrak{M}_{n}$ as a function of $n$ with $A$ fixed or as a function of $A$ with $n$ fixed (at 2 or higher) is rather steep: an eye examination of Figures~\ref{sfig:Mn_mu1_n2_vs_A}--\ref{sfig:Mn_mu1_n10_vs_A} and Figures~\ref{sfig:Mn_mu1_A1_vs_n}--\ref{sfig:Mn_mu1_A50_vs_n} suggests that it is at least exponential, and the rate is the higher, the higher the (fixed) value of $n$ or $A$.
\begin{figure}[!ht]
    \centering
    \begin{subfigure}{0.48\textwidth}
        \centering
        \includegraphics[width=\linewidth]{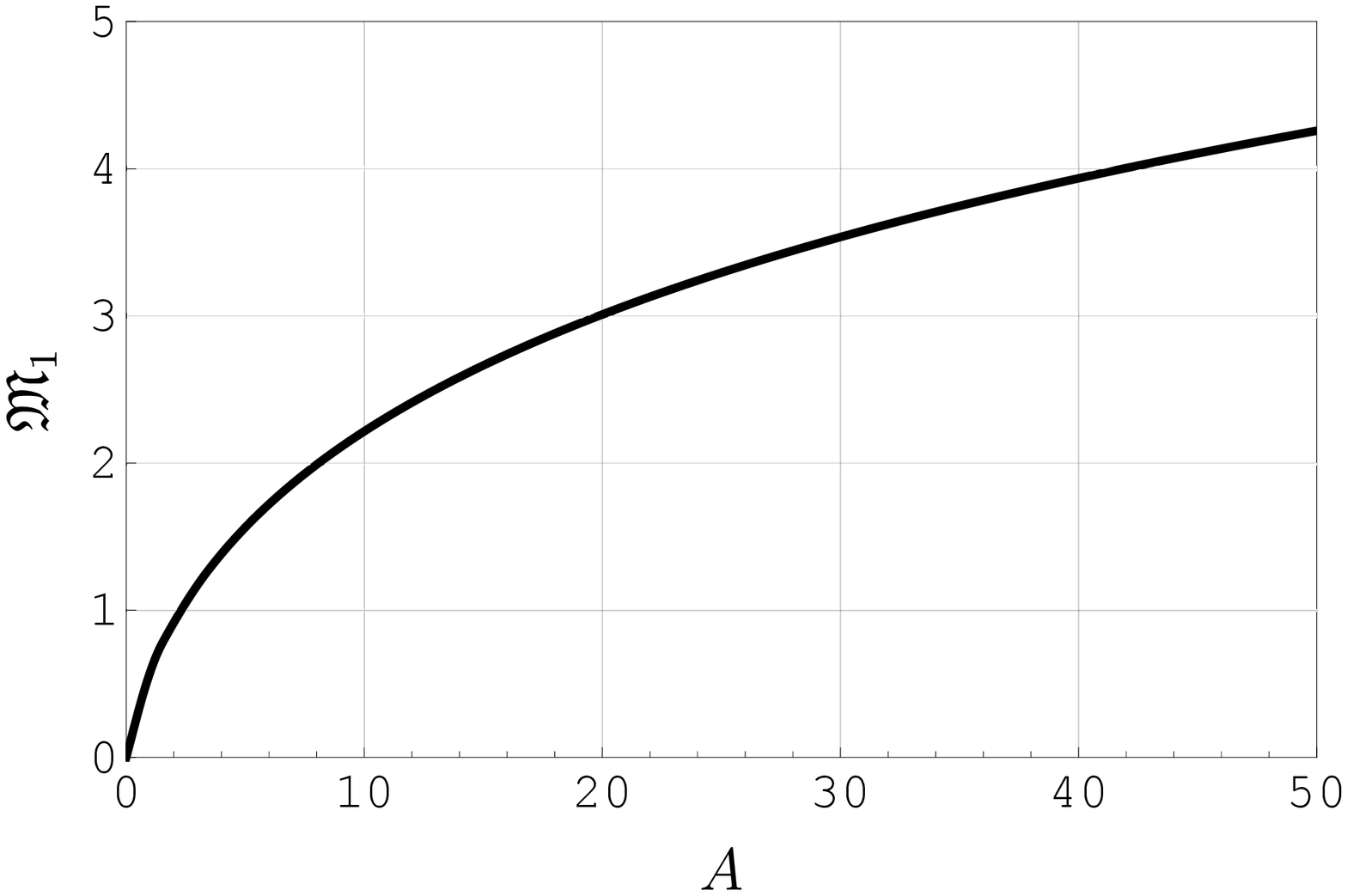}
        \caption{$n=1$.}
        \label{sfig:Mn_mu1_n1_vs_A}
    \end{subfigure}
    \hspace*{\fill}
    \begin{subfigure}{0.48\textwidth}
        \centering
        \includegraphics[width=\linewidth]{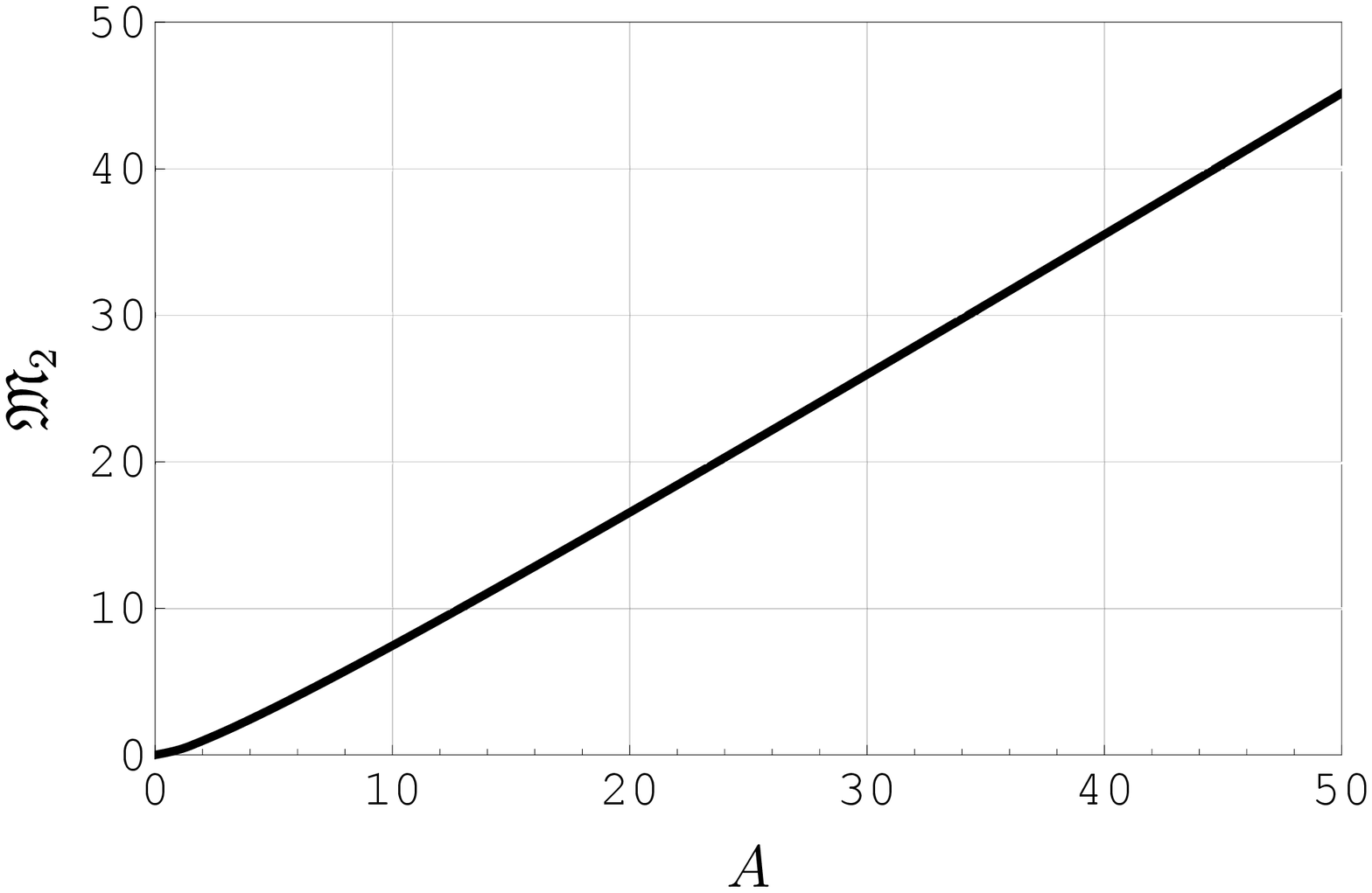}
        \caption{$n=2$.}
        \label{sfig:Mn_mu1_n2_vs_A}
    \end{subfigure}
    \hspace*{\fill}
    \begin{subfigure}{0.48\textwidth}
        \centering
        \includegraphics[width=\linewidth]{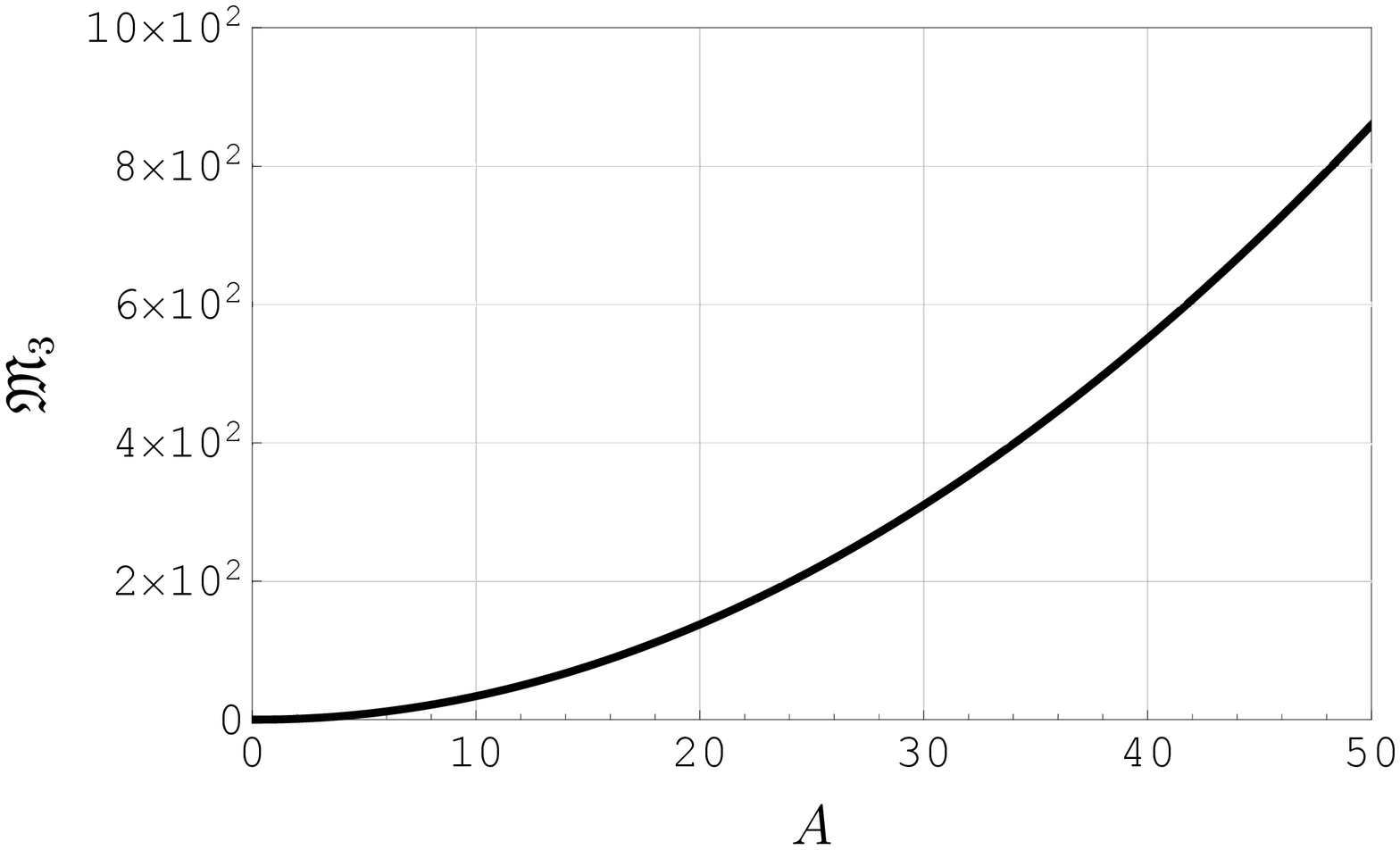}
        \caption{$n=3$.}
        \label{sfig:Mn_mu1_n3_vs_A}
    \end{subfigure}
    \hspace*{\fill}
    \begin{subfigure}{0.48\textwidth}
        \centering
        \includegraphics[width=\linewidth]{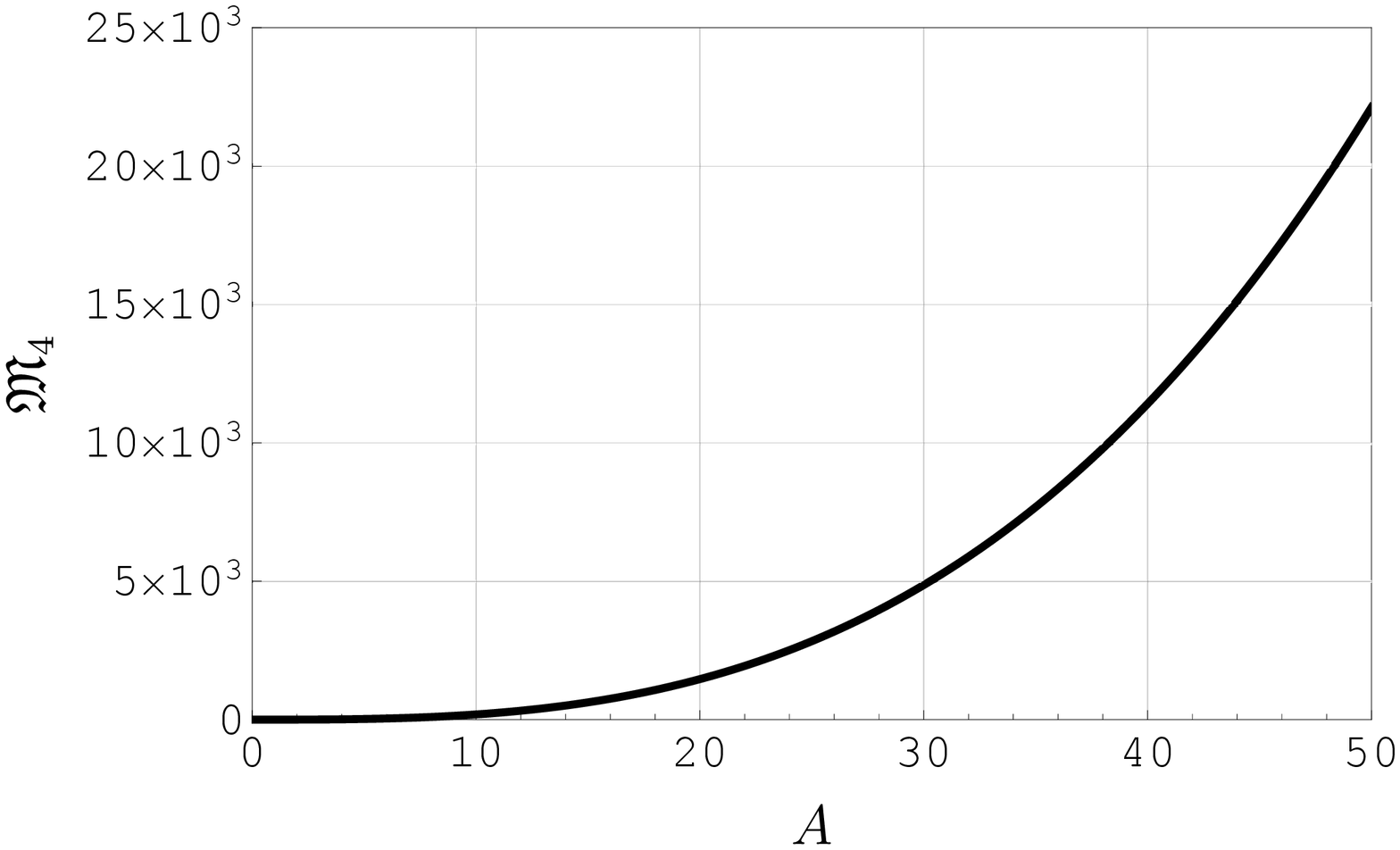}
        \caption{$n=4$.}
        \label{sfig:Mn_mu1_n4_vs_A}
    \end{subfigure}
    \hspace*{\fill}
    \begin{subfigure}{0.48\textwidth}
        \centering
        \includegraphics[width=\linewidth]{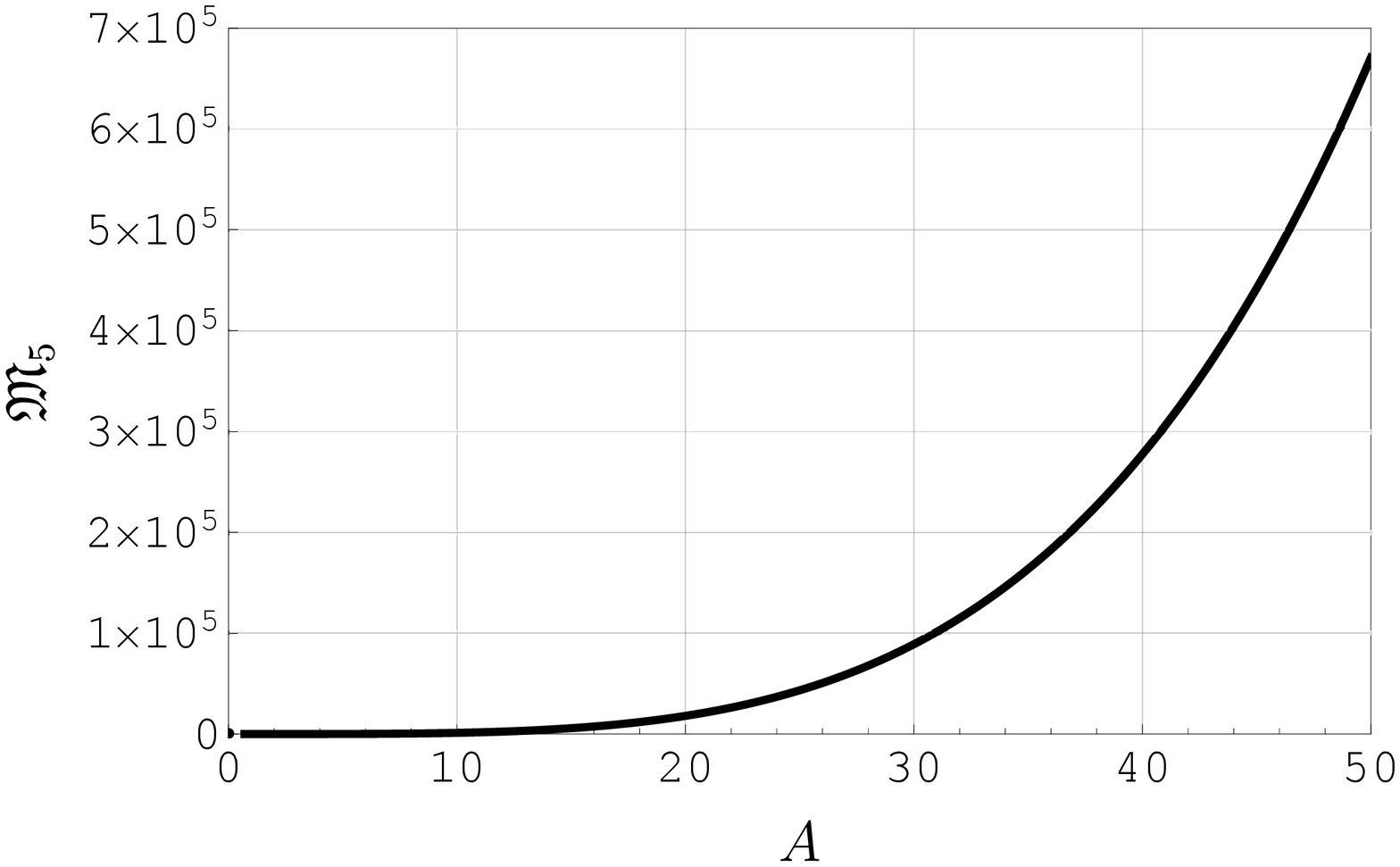}
        \caption{$n=5$.}
        \label{sfig:Mn_mu1_n5_vs_A}
    \end{subfigure}
    \hspace*{\fill}
    \begin{subfigure}{0.48\textwidth}
        \centering
        \includegraphics[width=\linewidth]{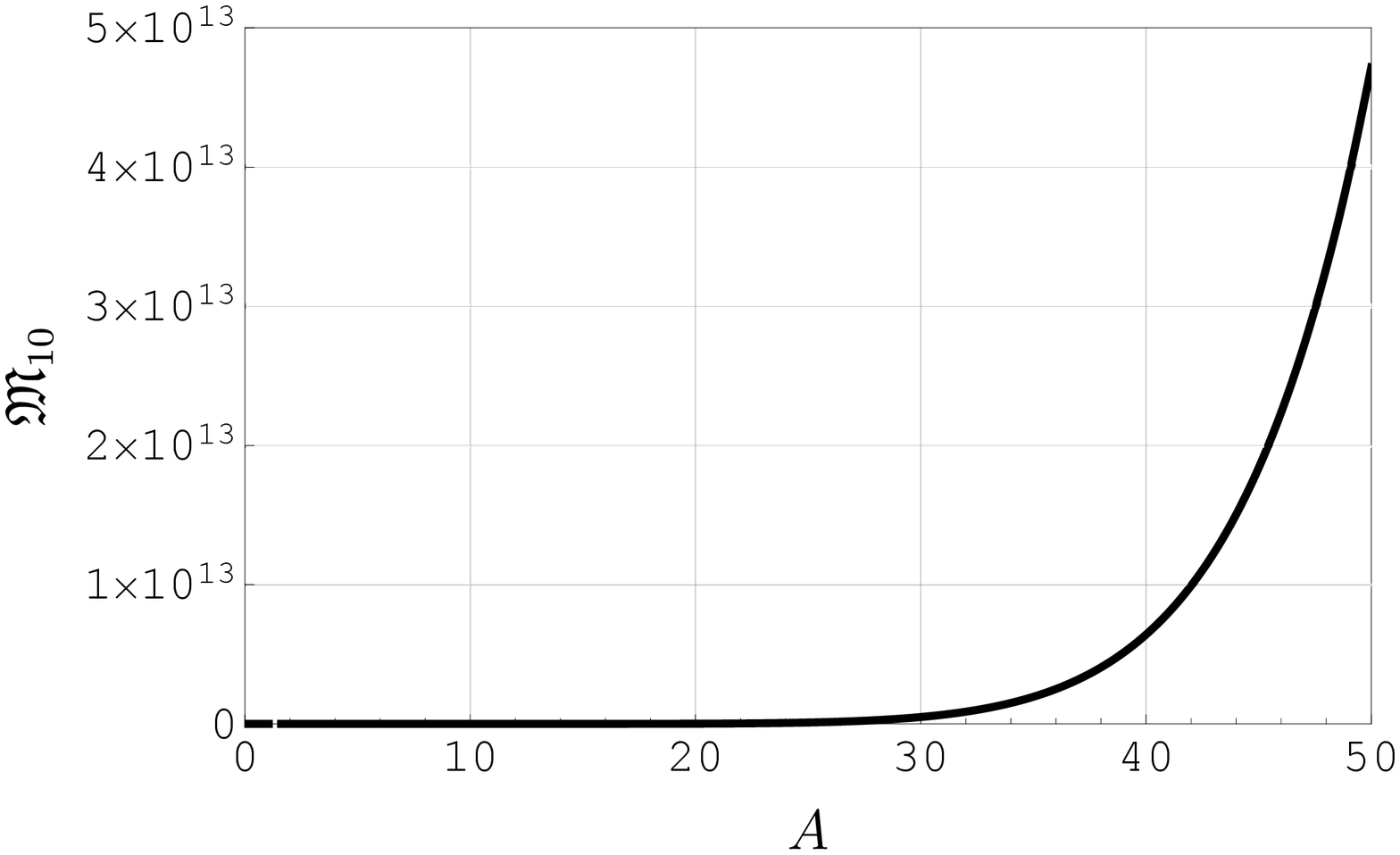}
        \caption{$n=10$.}
        \label{sfig:Mn_mu1_n10_vs_A}
    \end{subfigure}
    \caption{Quasi-stationary distribution's $n$-th moment $\mathfrak{M}_n$ as a function of $A$ for $A\in[0,50]$ and $n\in\{1,2,3,4,5,10\}$.}
    \label{fig:Mn_mu1_n_vs_A}
\end{figure}
\begin{figure}[!ht]
    \centering
    \begin{subfigure}{0.48\textwidth}
        \centering
        \includegraphics[width=\linewidth]{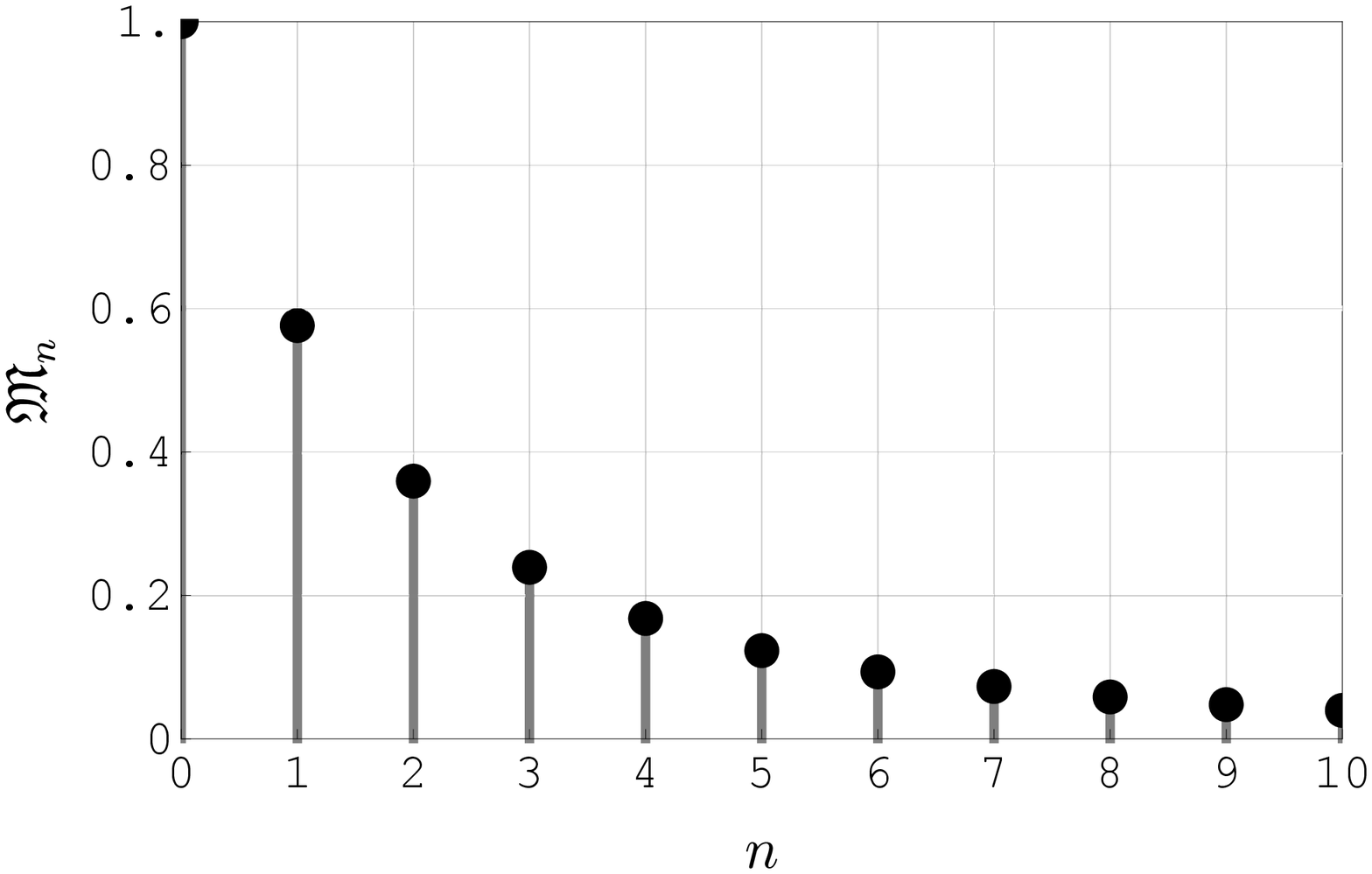}
        \caption{$A=1$.}
        \label{sfig:Mn_mu1_A1_vs_n}
    \end{subfigure}
    \hspace*{\fill}
    \begin{subfigure}{0.48\textwidth}
        \centering
        \includegraphics[width=\linewidth]{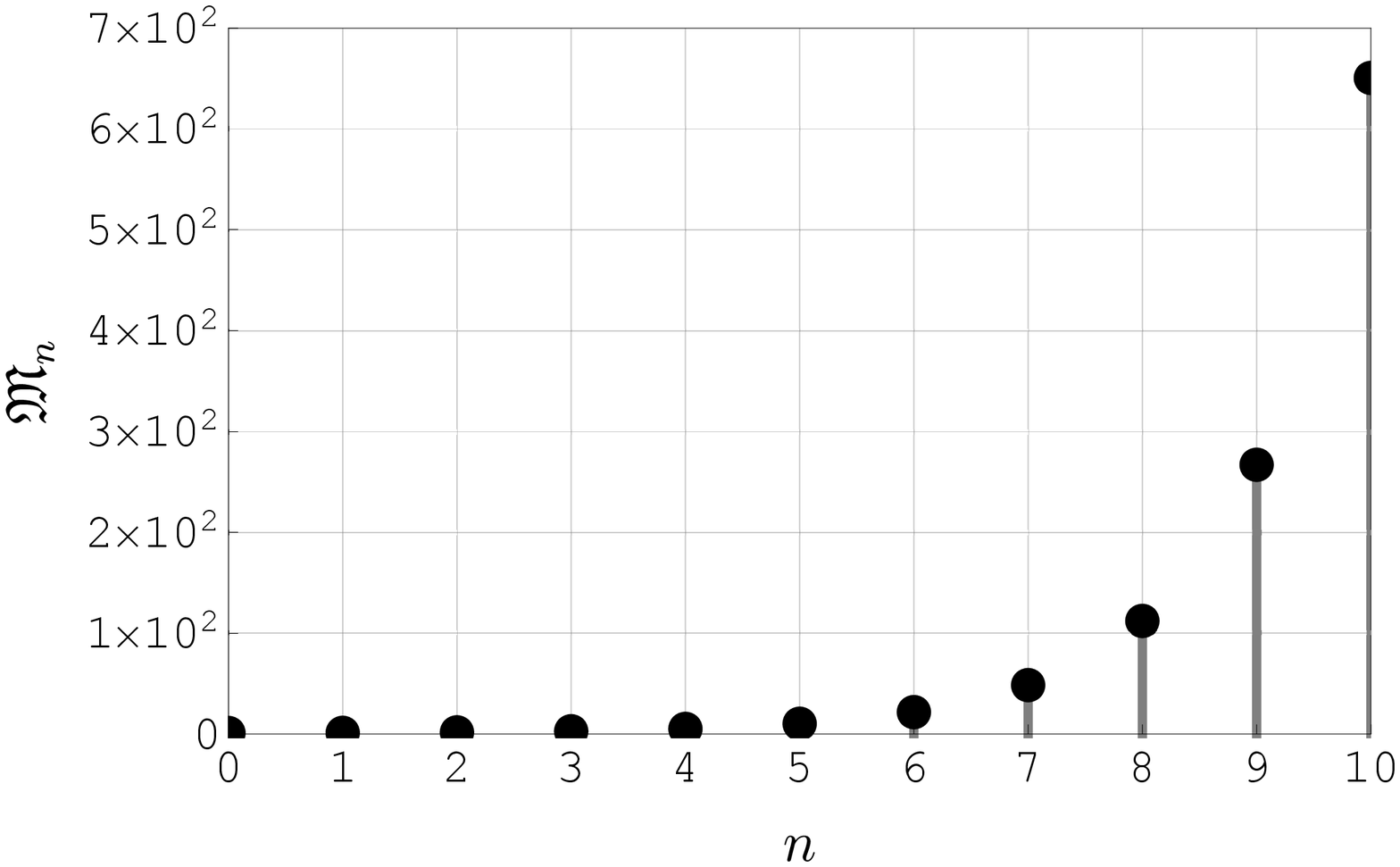}
        \caption{$A=3$.}
        \label{sfig:Mn_mu1_A3_vs_n}
    \end{subfigure}
    \hspace*{\fill}
    \begin{subfigure}{0.48\textwidth}
        \centering
        \includegraphics[width=\linewidth]{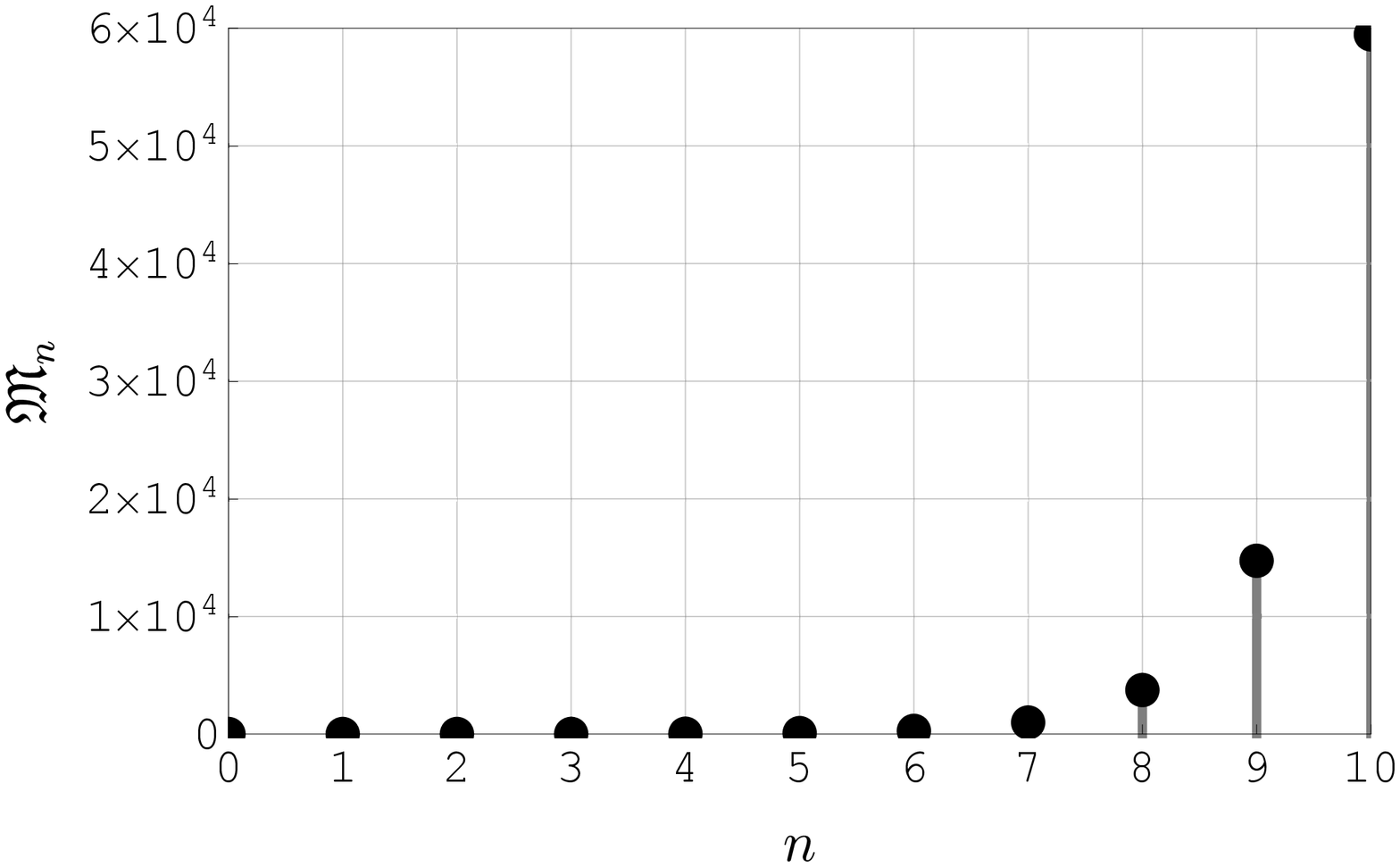}
        \caption{$A=5$.}
        \label{sfig:Mn_mu1_A5_vs_n}
    \end{subfigure}
    \hspace*{\fill}
    \begin{subfigure}{0.48\textwidth}
        \centering
        \includegraphics[width=\linewidth]{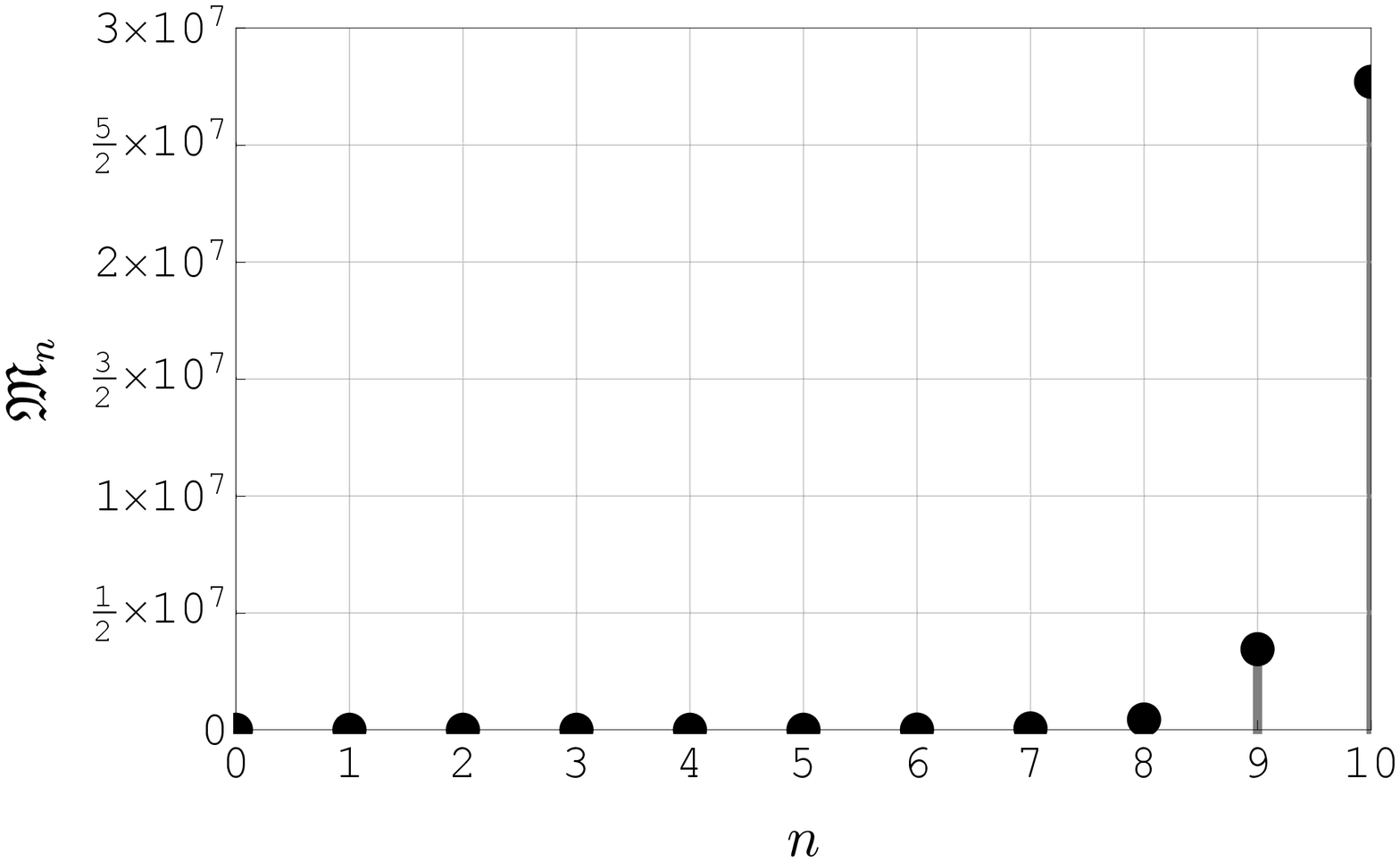}
        \caption{$A=10$.}
        \label{sfig:Mn_mu1_A10_vs_n}
    \end{subfigure}
    \hspace*{\fill}
    \begin{subfigure}{0.48\textwidth}
        \centering
        \includegraphics[width=\linewidth]{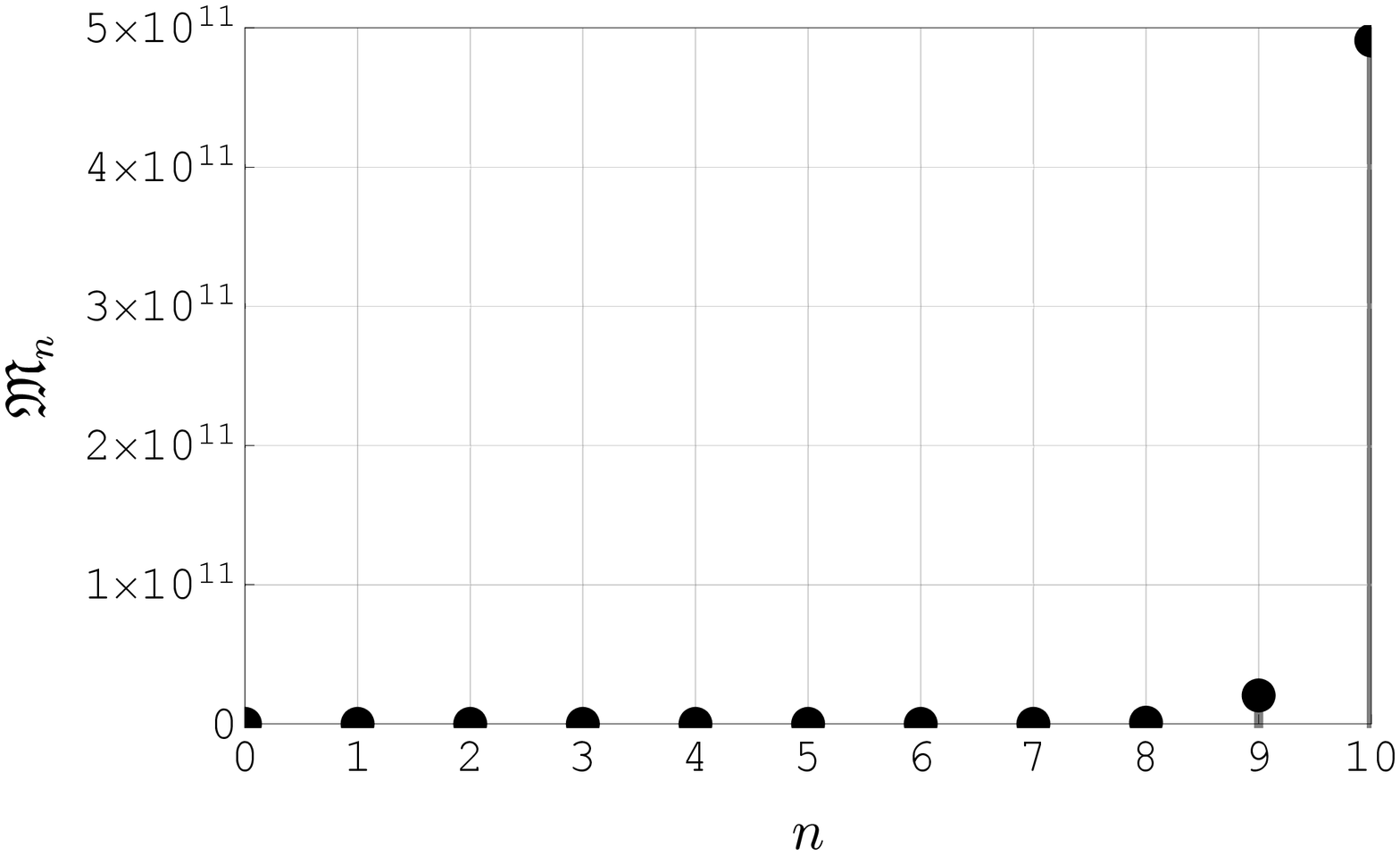}
        \caption{$A=30$.}
        \label{sfig:Mn_mu1_A30_vs_n}
    \end{subfigure}
    \hspace*{\fill}
    \begin{subfigure}{0.48\textwidth}
        \centering
        \includegraphics[width=\linewidth]{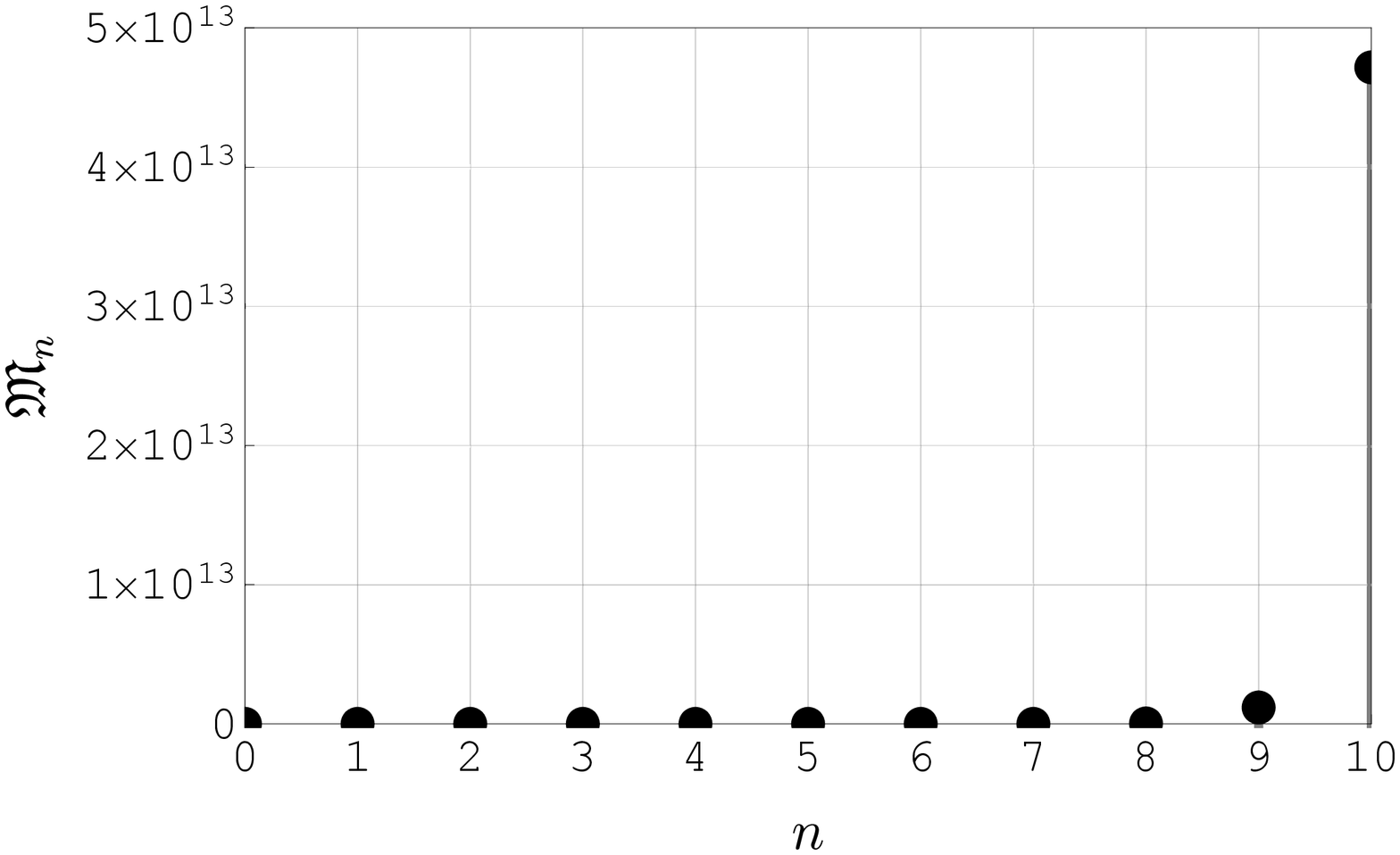}
        \caption{$A=50$.}
        \label{sfig:Mn_mu1_A50_vs_n}
    \end{subfigure}
    \caption{Quasi-stationary distribution's $n$-th moment $\mathfrak{M}_n$ as a function of $n$ for $n\in\{1,2,3,4,5,6,7,8,9,10\}$ and $A\in\{1,3,5,10,30,50\}$.}
    \label{fig:Mn_mu1_A_vs_n}
\end{figure}

However, as we shall see below, should one wish to compute the Laplace transform of the quasi-stationary distribution~\eqref{eq:QSD-pdf-answer}--\eqref{eq:QSD-cdf-answer}, either of the two formulae is instrumental, although one may find formula~\eqref{eq:QSD-Mn-answer-PowerSeries} to be of greater help than formula~\eqref{eq:QSD-Mn-answer-2F2}. The details as well as the actual computation of the Laplace transform are offered in the next subsection.

%+-----------------------------------------------------------------------------------------------+%
\subsection{Laplace transform}
\label{ssec:laplace-transform}

We now use the moment formulae obtained above to recover the Laplace transform of the quasi-stationary distribution~\eqref{eq:QSD-def}. Specifically, recall that, for each $A>0$ fixed, the quasi-stationary pdf $q_A(x)$ is given explicitly by~\eqref{eq:QSD-pdf-answer}, and since it is supported on the interval $[0,A]$, its Laplace transform can be defined as the integral
\begin{equation}\label{eq:L-def}
\mathcal{L}_{Q}(s)
\equiv
\mathcal{L}_{Q}\{q_A(x);x\to s\}(s,A)
\coloneqq
\int_{0}^{A} e^{-s x} q_A(x)\,dx,
\;\;
s\ge0,
\end{equation}
and it is connected to the quasi-stationary distribution's moment sequence $\{\mathfrak{M}_{n}\}_{n\ge0}$, given either by~\eqref{eq:QSD-Mn-answer-2F2} or by~\eqref{eq:QSD-Mn-answer-PowerSeries}, via the standard identity
\begin{equation}\label{eq:Mn-L-deriv-s0-formula}
\mathfrak{M}_{n}
=
(-1)^{n}\left.\left[\dfrac{d^{n}}{d s^{n}}\mathcal{L}_{Q}(s)\right]\right|_{s=0},
\end{equation}
leading to the classical power series representation of the Laplace transform
\begin{equation}\label{eq:L-moment-series}
\mathcal{L}_{Q}(s)
=
\sum_{n=0}^{\infty}\dfrac{(-s)^{n}}{n!} \mathfrak{M}_{n},
\end{equation}
which is nothing but the Taylor expansion of $\mathcal{L}_{Q}(s)$ around the origin. It is this expansion, rather than definition~\eqref{eq:L-def}, that we intend to employ shortly to compute $\mathcal{L}_{Q}\{q_A(x);x\to s\}(s,A)$, although with some restrictions on $s$ and $A$. The reason to prefer~\eqref{eq:L-moment-series} along with~\eqref{eq:QSD-Mn-answer-2F2} and~\eqref{eq:QSD-Mn-answer-PowerSeries} over~\eqref{eq:L-def} and~\eqref{eq:QSD-pdf-answer} is the presence of the Whittaker $W$ function on the right of the quasi-stationary pdf formula~\eqref{eq:QSD-pdf-answer}: the Whittaker $W$ function is a special function direct integration of which as in~\eqref{eq:L-def} is unlikely an option, for existing handbooks of special functions appear to offer no suitable integral identities. By contrast, the power series~\eqref{eq:L-moment-series} and the explicit moment formulae~\eqref{eq:QSD-Mn-answer-2F2} and~\eqref{eq:QSD-Mn-answer-PowerSeries} provide a more straightforward way to recover $\mathcal{L}_{Q}(s)$. However, one should keep in mind that the domain of convergence of the series need not be as large as the region of convergence of the integral~\eqref{eq:L-def} defining $\mathcal{L}_{Q}(s)$.
\begin{lemma}
For every $A>0$ fixed and finite, the Laplace transform $\mathcal{L}_{Q}\{q_A(x);x\to s\}(s,A)$ of the quasi-stationary distribution~\eqref{eq:QSD-pdf-answer}--\eqref{eq:QSD-cdf-answer} is given by
\begin{equation}\label{eq:lem-L-KdF-formula1}
\begin{split}
\mathcal{L}_{Q}&\{q_A(x);x\to s\}(s,A)
=\\
&
=
F\mathstrut_{2:0;0}^{0:2;1}
  \left[
   \setlength{\arraycolsep}{0pt}% local assignment
   \setlength{\extrarowheight}{8pt}
   \begin{array}{@{} c@{{}:{}} c@{;{}} c@{} @{}}
   \linefill & -\dfrac{1}{2}-\dfrac{\xi(\lambda)}{2},-\dfrac{1}{2}+\dfrac{\xi(\lambda)}{2} & \;\; 1
   \\[1ex]
   \dfrac{1}{2}-\dfrac{\xi(\lambda)}{2},\dfrac{1}{2}+\dfrac{\xi(\lambda)}{2} & \linefill & \linefill
   \\[5pt]
   \end{array}
   \;\middle|\;
   -sA,2s
 \right]
 ,
\end{split}
\end{equation}
where $s\in[0,+\infty)$, and $\lambda\equiv\lambda_A\;(>0)$ is determined by~\eqref{eq:lambda-eqn} while $\xi(\lambda)$ is defined in~\eqref{eq:xi-def}; recall also that $F\mathstrut_{2:0;0}^{0:2;1}[x,y]$ denotes the Kamp\'{e} de F\'{e}riet function~\eqref{eq:KdF-function-def}.
\end{lemma}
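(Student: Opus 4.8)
The plan is to bypass the integral definition~\eqref{eq:L-def} entirely and instead evaluate the moment series~\eqref{eq:L-moment-series} in closed form, feeding into it the explicit power-series moment formula~\eqref{eq:QSD-Mn-answer-PowerSeries}. Because the quasi-stationary distribution is supported on the bounded interval $[0,A]$, one has $\mathfrak{M}_{n}\le A^{n}$, whence $\sum_{n\ge0}(s^{n}/n!)\,\mathfrak{M}_{n}\le e^{sA}<+\infty$ for every $s\ge0$; the series~\eqref{eq:L-moment-series} therefore converges absolutely for all $s\in[0,+\infty)$ and genuinely represents $\mathcal{L}_{Q}(s)$, which is in fact an entire function of $s$. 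This absolute convergence is precisely what will license the rearrangement of the double sum in the final step.

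First I would substitute~\eqref{eq:QSD-Mn-answer-PowerSeries} into~\eqref{eq:L-moment-series}. The outer factor $(-s)^{n}/n!$ meets the prefactor $(-2)^{n}\,n!$ of $\mathfrak{M}_{n}$, and the two collapse cleanly, since $\frac{(-s)^{n}}{n!}\,(-2)^{n}\,n!=(2s)^{n}$, leaving
\begin{equation*}
\mathcal{L}_{Q}(s)
=
\sum_{n=0}^{\infty}
\frac{(2s)^{n}}{\left(\frac{1}{2}+\frac{\xi(\lambda)}{2}\right)_{n}\left(\frac{1}{2}-\frac{\xi(\lambda)}{2}\right)_{n}}
\sum_{k=0}^{n}
\left(-\frac{1}{2}+\frac{\xi(\lambda)}{2}\right)_{k}\left(-\frac{1}{2}-\frac{\xi(\lambda)}{2}\right)_{k}\frac{1}{k!}\left(-\frac{A}{2}\right)^{k}.
\end{equation*}
Next I would reindex by setting $i\coloneqq k$ and $j\coloneqq n-k$, so that $n=i+j$ and the iterated sum over $n\ge0$, $0\le k\le n$, becomes a genuine double sum over $i,j\ge0$. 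Splitting $(2s)^{n}=(2s)^{i}(2s)^{j}$ and absorbing $(-A/2)^{i}$ into $(2s)^{i}$ to form $(2s)^{i}(-A/2)^{i}=(-sA)^{i}$, the summand becomes
\begin{equation*}
\frac{\left(-\frac{1}{2}-\frac{\xi(\lambda)}{2}\right)_{i}\left(-\frac{1}{2}+\frac{\xi(\lambda)}{2}\right)_{i}}{\left(\frac{1}{2}-\frac{\xi(\lambda)}{2}\right)_{i+j}\left(\frac{1}{2}+\frac{\xi(\lambda)}{2}\right)_{i+j}}\frac{(-sA)^{i}(2s)^{j}}{i!}.
\end{equation*}
Inserting the trivial factor $1=(1)_{j}/j!$ into the $j$-sum then matches the summand of the Kampé de Fériet series~\eqref{eq:KdF-function-def} exactly, with numeratorial parameters $a_{1}=-\frac{1}{2}-\frac{\xi(\lambda)}{2}$, $a_{2}=-\frac{1}{2}+\frac{\xi(\lambda)}{2}$, denominatorial parameters $b_{1}=\frac{1}{2}-\frac{\xi(\lambda)}{2}$, $b_{2}=\frac{1}{2}+\frac{\xi(\lambda)}{2}$, and arguments $xy=-sA$, $x=2s$ (so that $y=-A/2$); this is precisely~\eqref{eq:lem-L-KdF-formula1}.

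The only genuine subtlety — and the step I expect to require the most care — is justifying the passage from the diagonal-organized iterated sum (over $n$, then $k$) to the full double sum over $i,j\ge0$. This is a rearrangement of a series, and is valid by absolute convergence, which follows either from the support bound $\mathfrak{M}_{n}\le A^{n}$ noted above or directly from the factorial growth of the Pochhammer symbols: the two numeratorial $(\cdot)_{i}$ factors grow like $(i!)^{2}$ while the denominatorial $(\cdot)_{i+j}$ factors grow like $((i+j)!)^{2}$, which forces convergence for all finite $s$ and $A$ — indeed the Kampé de Fériet function~\eqref{eq:KdF-function-def} is entire in both of its arguments. Finally, since the product of the two numeratorial parameters and the product of the two denominatorial parameters are each invariant under $\xi(\lambda)\mapsto-\xi(\lambda)$, the resulting formula inherits the $\xi(\lambda)$-symmetry of Remark~\ref{rem:xi-symmetry}, so that the particular labeling of $a_{1},a_{2}$ (and of $b_{1},b_{2}$) is immaterial.
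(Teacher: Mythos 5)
Your proposal is correct and takes essentially the same route as the paper: both substitute the explicit moment formula~\eqref{eq:QSD-Mn-answer-PowerSeries} into the Taylor expansion~\eqref{eq:L-moment-series} and rearrange the resulting double sum into the Kamp\'{e} de F\'{e}riet series~\eqref{eq:KdF-function-def}, your diagonal reindexing $i=k$, $j=n-k$ being the same manipulation as the paper's interchange of the order of summation followed by the shift $n\mapsto n+k$. Your explicit justification of the rearrangement (and of the validity of~\eqref{eq:L-moment-series} for all $s\ge0$) is a welcome addition that the paper leaves implicit; just note that the bound $\mathfrak{M}_{n}\le A^{n}$ by itself controls only the iterated sum, so it is your Pochhammer-growth argument that actually licenses the term-wise rearrangement of the double series.
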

\begin{proof}
If we tentatively set
\begin{equation*}
a
\coloneqq
\dfrac{1}{2}+\dfrac{\xi(\lambda)}{2}
\;\;
\text{and}
\;\;
b
\coloneqq
\dfrac{1}{2}-\dfrac{\xi(\lambda)}{2}
\end{equation*}
to ease our notation, then together~\eqref{eq:QSD-Mn-answer-PowerSeries},~\eqref{eq:L-moment-series}, and~\eqref{eq:KdF-function-def} can be seen to yield
\begin{equation*}
\begin{split}
\mathcal{L}_{Q}\{q_A(x)&;x\to s\}(s,A)
=
\\
&=
\sum_{n=0}^{\infty}\left\{\dfrac{(2s)^{n}}{(a)_{n}\,(b)_{n}}
\sum_{k=0}^{n}(1-a)_{k}\,(1-b)_{k}\,\dfrac{1}{k!}\left(-\dfrac{A}{2}\right)^{k}\right\}
\\
&=
\sum_{k=0}^{\infty}\left\{(1-a)_{k}\,(1-b)_{k}\,\dfrac{1}{k!}\left(-\dfrac{A}{2}\right)^{k}
\sum_{n=k}^{\infty}\dfrac{(2s)^{n}}{(a)_{n}\,(b)_{n}}\right\}
\\
&=
\sum_{k=0}^{\infty}\sum_{n=0}^{\infty}
\dfrac{(1-a)_{k}\,(1-b)_{k}\,(1)_{n}}{(a)_{n+k}\,(b)_{n+k}}\dfrac{(-sA)^{k} (2s)^{n}}{k!n!}
\\
&=
F\mathstrut_{2:0;0}^{0:2;1}
  \left[
   \setlength{\arraycolsep}{0pt}% local assignment
   \setlength{\extrarowheight}{2pt}
   \begin{array}{@{} c@{{}:{}} c@{;{}} c@{} @{}}
   \linefill & 1-a,1-b & \;\; 1
   \\[1ex]
   a,b & \linefill & \linefill
   \\[2pt]
   \end{array}
   \;\middle|\;
   -sA,2s
 \right],
\end{split}
\end{equation*}
and the desired result is now apparent.
\qed
\end{proof}

The obtained Laplace transform formula~\eqref{eq:lem-L-KdF-formula1} was arrived at through the transform's power series expansion~\eqref{eq:L-moment-series} and the quasi-stationary distribution's $n$-th moment formula \eqref{eq:QSD-Mn-answer-PowerSeries}. However, since the $n$-th moment also has the alternative but equivalent representation~\eqref{eq:QSD-Mn-answer-2F2}, the latter, too, by virtue of the power series expansion~\eqref{eq:L-moment-series}, can be used to obtain a (different, but equivalent) expression for the Laplace transform.
\begin{lemma}
For every $A>0$ fixed and finite, the Laplace transform $\mathcal{L}_{Q}\{q_A(x);x\to s\}(s,A)$ of the quasi-stationary distribution~\eqref{eq:QSD-pdf-answer}--\eqref{eq:QSD-cdf-answer} is given by
\begin{equation}\label{eq:lem-L-KdF-formula2}
\begin{split}
&\mathcal{L}_{Q}\{q_A(x);x\to s\}(s,A)
=
\dfrac{\lambda}{s}
\times
\\
&
\times
\left(
F\mathstrut_{2:0;0}^{0:2;1}
  \left[
   \setlength{\arraycolsep}{0pt}% local assignment
   \setlength{\extrarowheight}{8pt}
   \begin{array}{@{} c@{{}:{}} c@{;{}} c@{} @{}}
   \linefill & -\dfrac{1}{2}-\dfrac{\xi(\lambda)}{2},-\dfrac{1}{2}+\dfrac{\xi(\lambda)}{2} & \;\; 1
   \\[1ex]
   -\dfrac{1}{2}-\dfrac{\xi(\lambda)}{2},-\dfrac{1}{2}+\dfrac{\xi(\lambda)}{2} & \linefill & \linefill
   \\[5pt]
   \end{array}
   \;\middle|\;
   -sA,2s
 \right]
\right.
-
\\
&\qquad\qquad\qquad
-
e^{-s A}
\left.
\vphantom{
F\mathstrut_{2:0;0}^{0:2;1}
  \left[
   \setlength{\arraycolsep}{0pt}% local assignment
   \setlength{\extrarowheight}{8pt}
   \begin{array}{@{} c@{{}:{}} c@{;{}} c@{} @{}}
   \linefill & -\dfrac{1}{2}-\dfrac{\xi(\lambda)}{2},-\dfrac{1}{2}+\dfrac{\xi(\lambda)}{2} & \;\; 1
   \\[1ex]
   -\dfrac{1}{2}-\dfrac{\xi(\lambda)}{2},-\dfrac{1}{2}+\dfrac{\xi(\lambda)}{2} & \linefill & \linefill
   \\[5pt]
   \end{array}
   \;\middle|\;
   -sA,2s
 \right]
}
\right)
,
\end{split}
\end{equation}
where $s\in[0,+\infty)$, and $\lambda\equiv\lambda_A\;(>0)$ is determined by~\eqref{eq:lambda-eqn} while $\xi(\lambda)$ is defined in~\eqref{eq:xi-def}; recall also that $F\mathstrut_{2:0;0}^{0:2;1}[x,y]$ denotes the Kamp\'{e} de F\'{e}riet function~\eqref{eq:KdF-function-def}.
\end{lemma}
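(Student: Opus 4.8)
The plan is to mirror the proof of the preceding lemma, but to feed the \emph{hypergeometric} moment formula~\eqref{eq:QSD-Mn-answer-2F2} into the power-series representation~\eqref{eq:L-moment-series} of the Laplace transform, rather than the ``unrolled'' formula~\eqref{eq:QSD-Mn-answer-PowerSeries}. As in the previous proof, set $a\coloneqq\tfrac{1}{2}+\tfrac{\xi(\lambda)}{2}$ and $b\coloneqq\tfrac{1}{2}-\tfrac{\xi(\lambda)}{2}$ to abbreviate, and record the pivotal factorization $n(n-1)+2\lambda=(n-a)(n-b)$, which follows at once from $\xi(\lambda)^2=1-8\lambda$ and hence $ab=\tfrac{1}{4}(1-\xi(\lambda)^2)=2\lambda$. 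Substituting~\eqref{eq:QSD-Mn-answer-2F2} into~\eqref{eq:L-moment-series} and writing the (terminating) ${}_{2}F_{2}$ series out explicitly yields a double sum over $n\ge0$ and $0\le k\le n$, in which the quadratic $(n-a)(n-b)$ sits in a denominator and the $n$-dependent denominatorial parameters $a+1-n$ and $b+1-n$ of the ${}_{2}F_{2}$ appear as Pochhammer symbols $(a+1-n)_k$ and $(b+1-n)_k$.

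The crux of the argument is to decouple this double sum by the re-indexing $m\coloneqq n-k$. After using $(-n)_k=(-1)^k n!/(n-k)!$ and collecting the powers of $s$, the summand acquires the clean form $(-sA)^m(2s)^k/m!$ times the reciprocal of $(n-a)(n-b)\,(a+1-n)_k\,(b+1-n)_k$ evaluated at $n=m+k$. The key simplification is that each falling-factorial block telescopes against the quadratic prefactor: one checks directly that $(a+1-m-k)_k=(-1)^k(m-a)_k$ (and likewise with $b$ in place of $a$), after which the factors $(m+k-a)=(m-a)+k$ and $(m+k-b)$ absorb into the Pochhammer symbols, collapsing the combined denominator to $(m-a)_{k+1}(m-b)_{k+1}$. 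This is the step I expect to be the main obstacle, since it is precisely here that the $n$-dependent parameters of the ${}_{2}F_{2}$ are converted into honest Pochhammer symbols capable of telescoping against the factor $(n-a)(n-b)$ inherited from the prefactor of~\eqref{eq:QSD-Mn-answer-2F2}.

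With the denominator in the form $(m-a)_{k+1}(m-b)_{k+1}$, I would shift $j\coloneqq k+1$, pull out the common factor $\lambda/s$, and split the resulting $j$-sum by writing $\sum_{j\ge1}=\sum_{j\ge0}$ minus its $j=0$ term. The full sum over $m\ge0$ and $j\ge0$ is, by~\eqref{eq:KdF-function-def}, exactly the Kamp\'{e} de F\'{e}riet function displayed in~\eqref{eq:lem-L-KdF-formula2}, whose numeratorial and denominatorial parameters coincide and equal $(-a,-b)=(-\tfrac{1}{2}-\tfrac{\xi(\lambda)}{2},-\tfrac{1}{2}+\tfrac{\xi(\lambda)}{2})$: the identity $(x)_{m+j}=(x)_m\,(x+m)_j$ reduces the doubly-indexed denominator $(-a)_{m+j}(-b)_{m+j}$ to precisely $(m-a)_j(m-b)_j$, matching the expression obtained above. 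The excised $j=0$ contribution, in turn, sums over $m$ to $\sum_{m\ge0}(-sA)^m/m!=e^{-sA}$, so that assembling the two pieces produces $\tfrac{\lambda}{s}$ times the difference of the Kamp\'{e} de F\'{e}riet function and $e^{-sA}$, which is~\eqref{eq:lem-L-KdF-formula2}. Finally I would note that the formal interchange of summations is justified by absolute convergence for small $s>0$, and that the resulting expression, being entire in $s$ for every fixed $A>0$, extends to all $s\in[0,+\infty)$ by analyticity.
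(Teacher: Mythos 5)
Your proof is correct, but it takes a genuinely different route from the paper's. The paper does \emph{not} substitute~\eqref{eq:QSD-Mn-answer-2F2} directly into the moment series~\eqref{eq:L-moment-series}; instead it multiplies the recurrence~\eqref{eq:QSD-moments-recurrence} through by $(-s)^n/n!$ and sums over $n$, obtaining $s\,\mathcal{L}_{Q}(s)=\tfrac{1}{2}\sum_{n\ge0}\big[n(n-1)+2\lambda\big]\tfrac{(-s)^n}{n!}\mathfrak{M}_{n}-\lambda e^{-sA}$, so that the quadratic factor $n(n-1)+2\lambda$ cancels the denominator of~\eqref{eq:QSD-Mn-answer-2F2} outright; the remaining sum $\lambda\sum_{n}\tfrac{(-sA)^n}{n!}\,{}_{2}F_{2}[\cdots]$ is then converted into the Kamp\'{e} de F\'{e}riet function via the finite-sum form~\eqref{eq:Mn-2F2-sum-form} and an interchange of summations (this is the paper's key identity~\eqref{eq:sumMn-KdF-identity}). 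You bypass the recurrence altogether: you keep the quadratic in the denominator---your factorization $n(n-1)+2\lambda=(n-a)(n-b)$ with $a=\tfrac12+\tfrac{\xi(\lambda)}{2}$, $b=\tfrac12-\tfrac{\xi(\lambda)}{2}$ is correct, since $a+b=1$ and $ab=\tfrac14\big(1-[\xi(\lambda)]^2\big)=2\lambda$---and absorb it into Pochhammer symbols via $(a+1-m-k)_k=(-1)^k(m-a)_k$ and $(m-a)_k\,(m+k-a)=(m-a)_{k+1}$, both of which check out. In your version the prefactor $\lambda/s$ arises from the shift $j=k+1$ and the $-e^{-sA}$ term is the excised $j=0$ slice of the double sum, whereas in the paper both are produced mechanically by the recurrence before any hypergeometric manipulation begins. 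The trade-off: the paper's bookkeeping is lighter, since the recurrence removes the quadratic before it can cause trouble, while your derivation is self-contained given~\eqref{eq:QSD-Mn-answer-2F2} alone and makes transparent why the combination $\tfrac{\lambda}{s}\big(F-e^{-sA}\big)$ emerges. Your closing remark on justifying the interchange of summations (absolute convergence for small $s>0$, then extension to all $s\ge0$ by analyticity, the singularity of $\lambda/s$ at the origin being removable) is sound and, if anything, more careful than the paper, which performs the same manipulations formally.
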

\begin{proof}
The idea is to multiply equation~\eqref{eq:QSD-moments-recurrence} through by $(-s)^n/n!$ to obtain
\begin{equation*}
\left(\dfrac{n(n-1)}{2}+\lambda\right)\dfrac{(-s)^{n}}{n!}\mathfrak{M}_{n}-s\dfrac{(-s)^{n-1}}{(n-1)!}\mathfrak{M}_{n-1}
=
\lambda \dfrac{(-sA)^{n}}{n!},
\end{equation*}
which, in conjunction with~\eqref{eq:L-moment-series}, readily gives
\begin{equation*}
\begin{split}
s\mathcal{L}_{Q}\{q_A(x);x\to s\}(s,A)
&=
s\sum_{n=1}^{\infty}\dfrac{(-s)^{n-1}}{(n-1)!}\mathfrak{M}_{n-1}
\\
&=
\sum_{n=1}^{\infty}\left(\dfrac{n(n-1)}{2}+\lambda\right)\dfrac{(-s)^{n}}{n!}\mathfrak{M}_{n}-\lambda\sum_{n=1}^{\infty} \dfrac{(-sA)^{n}}{n!}
\\
&=
\dfrac{1}{2}\left(\,\sum_{n=0}^{\infty}\big[n(n-1)+2\lambda\big]\dfrac{(-s)^{n}}{n!}\mathfrak{M}_{n}-2\lambda\right)-\lambda\left(e^{-sA}-1\right)
\\
&=
\dfrac{1}{2}\sum_{n=0}^{\infty}\big[n(n-1)+2\lambda\big]\dfrac{(-s)^{n}}{n!}\mathfrak{M}_{n}-\lambda e^{-sA},
\end{split}
\end{equation*}
so that if we could now show that
\begin{equation}\label{eq:sumMn-KdF-identity}
\begin{split}
\dfrac{1}{2}&\sum_{n=0}^{\infty}\big[n(n-1)+2\lambda\big]\dfrac{(-s)^{n}}{n!}\mathfrak{M}_{n}
=
\\
&=
\lambda
F\mathstrut_{2:0;0}^{0:2;1}
  \left[
   \setlength{\arraycolsep}{0pt}% local assignment
   \setlength{\extrarowheight}{8pt}
   \begin{array}{@{} c@{{}:{}} c@{;{}} c@{} @{}}
   \linefill & -\dfrac{1}{2}-\dfrac{\xi(\lambda)}{2},-\dfrac{1}{2}+\dfrac{\xi(\lambda)}{2} & \;\; 1
   \\[1ex]
   -\dfrac{1}{2}-\dfrac{\xi(\lambda)}{2},-\dfrac{1}{2}+\dfrac{\xi(\lambda)}{2} & \linefill & \linefill
   \\[5pt]
   \end{array}
   \;\middle|\;
   -sA,2s
 \right]
,
\end{split}
\end{equation}
then the proof would be complete. To show~\eqref{eq:sumMn-KdF-identity}, introduce
\begin{equation}\label{eq:lem-L-KdF-formula2-sub}
a
\coloneqq
\dfrac{3}{2}+\dfrac{\xi(\lambda)}{2}
\;\;
\text{and}
\;\;
b
\coloneqq
\dfrac{3}{2}-\dfrac{\xi(\lambda)}{2}
\end{equation}
to, again, temporarily ease the notation, and observe from~\eqref{eq:QSD-Mn-answer-2F2} and~\eqref{eq:Mn-2F2-sum-form} that
\begin{equation*}
\begin{split}
\dfrac{1}{2}\sum_{n=0}^{\infty}\big[n(n-1)+2\lambda\big]&\dfrac{(-s)^{n}}{n!}\mathfrak{M}_{n}
=
\\
&=
\lambda\sum_{n=0}^{\infty}
\dfrac{(-sA)^{n}}{n!}
{}_{2}F_{2}
\left[
    \setlength{\arraycolsep}{0pt}% local assignment
    \setlength{\extrarowheight}{2pt}
    \begin{array}{@{} c@{{}{}} @{}}
    1,-n
    \\[1ex]
    a-n,b-n
    \\[5pt]
    \end{array}
    \;\middle|\;
    \dfrac{2}{A}
\right]
\\
&=
\lambda
\left\{
\sum_{n=0}^{\infty}
\dfrac{(-2s)^{n}}{(1-a)_{n}\,(1-b)_{n}}
\sum_{k=0}^{n}(1-a)_{k}\,(1-b)_{k}\dfrac{1}{k!}\left(-\dfrac{A}{2}\right)^{k}
\right\}
\\
&=
\lambda
\left\{
\sum_{k=0}^{\infty}
(1-a)_{k}\,(1-b)_{k}\dfrac{1}{k!}\left(-\dfrac{A}{2}\right)^{k}
\sum_{n=k}^{\infty}\dfrac{(-2s)^{n}}{(1-a)_{n}\,(1-b)_{n}}
\right\}
\\
&=
\lambda\sum_{k=0}^{\infty}
\sum_{n=0}^{\infty}\dfrac{(1-a)_{k}\,(1-b)_{k}\,(1)_{n}}{(1-a)_{n+k}\,(1-b)_{n+k}}\dfrac{(-sA)^{k}\,(2s)^{n}}{k!n!}
\\
&=
\lambda
F\mathstrut_{2:0;0}^{0:2;1}
  \left[
   \setlength{\arraycolsep}{0pt}% local assignment
   \setlength{\extrarowheight}{8pt}
   \begin{array}{@{} c@{{}:{}} c@{;{}} c@{} @{}}
   \linefill &1-a,1-b& \;\; 1
   \\[1ex]
   1-a,1-b & \linefill & \linefill
   \\[5pt]
   \end{array}
   \;\middle|\;
   -sA,2s
 \right]
,
\end{split}
\end{equation*}
which, in view of~\eqref{eq:lem-L-KdF-formula2-sub}, can be recognized to be exactly~\eqref{eq:sumMn-KdF-identity}. The proof is now complete.
\qed
\end{proof}

We now return to the point made earlier about the domain of convergence of the series~\eqref{eq:L-moment-series} potentially being narrower than the region of convergence of the integral~\eqref{eq:L-def} defining $\mathcal{L}_{Q}(s)$. This is, in fact, the case, for the obtained Laplace transform formulae~\eqref{eq:lem-L-KdF-formula1} and~\eqref{eq:lem-L-KdF-formula2} both break down in the limit, as either $A\to+\infty$ or $s\to+\infty$. The reason is because the Kamp\'{e} de F\'{e}riet function involved in either formula is well-defined only when both of its two arguments are finite. That said, except for the two limiting cases---one as $A\to+\infty$ and one as $s\to+\infty\,$---formulae~\eqref{eq:lem-L-KdF-formula1} and~\eqref{eq:lem-L-KdF-formula2} are valid.

At this point one may rightly remark that the Kamp\'{e} de F\'{e}riet function in general is a somewhat ``exotic'' special function, although its importance appears to have been well-understood in the literature on mathematical physics. To that end, an interesting question is whether the function  $F\mathstrut_{2:0;0}^{0:2;1}[x,y]$ on the right of formula~\eqref{eq:lem-L-KdF-formula1} permits an alternative expression involving either no special functions at all, or, in the worst case, only ``less exotic'' special functions. While it is very unlikely that our particular function $F\mathstrut_{2:0;0}^{0:2;1}[x,y]$ can be reduced to a form completely free of special functions, it may be possible to express it in terms of fairly widespread modified Bessel functions of the first and second kinds, conventionally denoted as $I_{a}(z)$ and $K_{a}(z)$, respectively. This possibility is indicated by~\cite[Identity~(4.2a),~p.~184]{Miller+Moskowitz:JCAM1998} which states that
\begin{equation}\label{eq:KdF-KI-identity-MillerMoskowitz98}
\begin{split}
&F\mathstrut_{2:0;0}^{0:2;1}
  \left[
   \setlength{\arraycolsep}{0pt}% local assignment
   \setlength{\extrarowheight}{10pt}
   \begin{array}{@{} c@{{}:{}} c@{;{}} c@{} @{}}
   \linefill & \dfrac{a+b+1}{2},\dfrac{a-b+1}{2} & \;\; 1
   \\[1ex]
   \dfrac{a+b+3}{2},\dfrac{a-b+3}{2} & \linefill & \linefill
   \\[3mm]
   \end{array}
   \;\middle|\;
   x\dfrac{y^2}{4},\dfrac{y^2}{4}
 \right]
=
\\
&\quad
=
\dfrac{(a+b+1)(a-b+1)}{y^{a+1}}\Biggl\{I_{b}(y)\int_{0}^{y}e^{\tfrac{x}{4}u^2}u^{a}K_{b}(u)\,du
-
K_{b}(y)\int_{0}^{y}e^{\tfrac{x}{4}u^2}u^{a}I_{b}(u)\,du\Biggr\},
\end{split}
\end{equation}
valid so long as $\Re(1+a\pm b)>0$; the condition $\Re(1+a\pm b)>0$ is to assure that the near-origin behavior of the modified Bessel $I$ function
\begin{equation}\label{eq:BesselI-small-arg-asymptotics}
I_{b}(z)
\sim
\dfrac{1}{\Gamma(b+1)}\left(\dfrac{z}{2}\right)^{b}
,
\;
\text{as}
\;
z\to0
,
\;\;
\text{provided}
\;\;
b\not\in\{-1,-2,-3,\ldots\},
\end{equation}
as given, e.g., by~\cite[Property~9.6.7,~p.~375]{Abramowitz+Stegun:Handbook1964}, and that of the modified Bessel $K$ function
\begin{equation}\label{eq:BesselK-small-arg-asymptotics}
K_{b}(z)
\sim
\dfrac{1}{2\Gamma(b)}\left(\dfrac{z}{2}\right)^{-b}
,
\;
\text{as}
\;
z\to0,
\;\;
\text{provided}
\;\;
\Re(b)>0,
\end{equation}
as given, e.g., by~\cite[Property~9.6.9,~p.~375]{Abramowitz+Stegun:Handbook1964}, are such that the two integrals on the right of~\eqref{eq:KdF-KI-identity-MillerMoskowitz98}, i.e., the integrals
\begin{equation*}
\int_{0}^{y}e^{\tfrac{x}{4}u^2}u^{a}\,I_{b}(u)\,du
\;\;
\text{and}
\;\;
\int_{0}^{y}e^{\tfrac{x}{4}u^2}u^{a}\,K_{b}(u)\,du,
\end{equation*}
are convergent, for any $y\in[0,+\infty)$; cf.~\cite{Miller+Moskowitz:JFI1991}. Incidentally, the foregoing two integrals are examples of incomplete Weber integrals, which arise in mathematical physics and in certain areas of probability theory; see, e.g.,~\cite{Miller+Moskowitz:JFI1991,Miller+Moskowitz:JCAM1998}.

It is plain to see that the Kamp\'{e} de F\'{e}riet function on the left of identity~\eqref{eq:KdF-KI-identity-MillerMoskowitz98} with $a=-2$ and $b=\xi(\lambda)$ is of precisely the same form as the Kamp\'{e} de F\'{e}riet function on the right of the Laplace transform formula~\eqref{eq:lem-L-KdF-formula1}. However, identity~\eqref{eq:KdF-KI-identity-MillerMoskowitz98} with $a=-2$ and $b=\xi(\lambda)$, which is the case we are interested in, does not hold true. This is due to two reasons. First, the condition $\Re(1+a\pm b)>0$ is false for $a=-2$ and $b=\xi(\lambda)$, because $\xi(\lambda)$, as was explained in Remark~\ref{rem:xi-complex-real}, is either purely imaginary (so that $\Re(b)=0$) or purely real and between 0 inclusive and 1 exclusive (so that $0\le b<1$). The second reason is that, in our case, the parameter $b=\xi(\lambda)$ happens to be connected (and in very specific manner!) to the first argument of the Kamp\'{e} de F\'{e}riet function; the connection is through equation~\eqref{eq:lambda-eqn}. Yet, although not directly applicable in our case,  identity~\eqref{eq:KdF-KI-identity-MillerMoskowitz98} is still of value: observe that its right-hand side resembles the variation of parameters formula for a particular solution to a second-order nonhomogeneous ordinary differential equation. Moreover, this equation is not too difficult to ``reverse engineer''. To this end, it can be deduced from~\cite{Polunchenko:SA2017a} that, for every $A>0$ fixed, the Laplace transform $\mathcal{L}_{Q}(s)\equiv\mathcal{L}_{Q}\{q_A(x);x\to s\}(s,A)$ defined in~\eqref{eq:L-def} is the solution $L(s)\equiv L(s,A)$ of the equation
\begin{equation}\label{eq:L-ode}
\dfrac{s^2}{2}\dfrac{\partial^2}{\partial s^2}L(s)-(s-\lambda)\,L(s)
=
\lambda e^{-sA},
\;\;
s\ge0,
\end{equation}
where recall that $\lambda\equiv\lambda_{A}\;(>0)$ and $A$ are coupled together via equation~\eqref{eq:lambda-eqn}. As we shall see shortly, the right-hand side of identity~\eqref{eq:KdF-KI-identity-MillerMoskowitz98} with $a=-2$ and $b=\xi(\lambda)$ is precisely what the method of variation of parameters yields as a particular solution to the foregoing equation~\eqref{eq:L-ode}. However, this particular solution is not {\em the} solution, because it does not satisfy the appropriate boundary conditions, which are $\lim_{s\to0+}L(s)=1$, $\lim_{s\to+\infty}L(s)=0$, and
\begin{equation}\label{eq:L-ode-bnd-condition-Mn}
\left.\left[\dfrac{d^n}{ds^n}L(s)\right]\right|_{s=0}
=
(-1)^{n}\,\mathfrak{M}_{n},
\;\;
n\in\mathbb{N},
\end{equation}
where $\mathfrak{M}_{n}$ is the $n$-th moment of the quasi-stationary distribution; recall formulae~\eqref{eq:QSD-Mn-answer-2F2} and~\eqref{eq:QSD-Mn-answer-PowerSeries} we established for $\mathfrak{M}_{n}$ in the preceding subsection. The first two of the boundary conditions come from the definition~\eqref{eq:L-def} of the Laplace transform, and the third condition is due to~\eqref{eq:Mn-L-deriv-s0-formula}.

To solve equation~\eqref{eq:L-ode} directly, observe that the change of variables $s\mapsto u\equiv u(s)\coloneqq 2\sqrt{2s}$ and the substitution $L(s)\mapsto L(u)\coloneqq u\, \ell(u)$ together convert the equation into
\begin{equation}\label{eq:ell-ode}
u^2 \dfrac{\partial^2}{\partial u^2}\ell(u)+u \dfrac{\partial}{\partial u}\ell(u)-\left(u^2+\big[\xi(\lambda)\big]^{2}\right)\ell(u)
=
\dfrac{8\lambda}{u}e^{-\tfrac{A}{8}u^2},
\end{equation}
which is a nonhomogeneous version of the modified Bessel equation. Hence, by definition, the two fundamental solutions, $\ell^{(1)}(u)$ and $\ell^{(2)}(u)$, to the homogeneous version of the equation are
\begin{equation*}%\label{eq:ell-ode-fundamental-solutions}
\ell^{(1)}(u)
\coloneqq
I_{\xi(\lambda)}(u)
\;\;
\text{and}
\;\;
\ell^{(2)}(u)
\coloneqq
K_{\xi(\lambda)}(u),
\end{equation*}
which can be used to construct a particular solution, $\ell^{\mathrm{(p)}}(u)$, to the nonhomogeneous equation via variation of parameters. Specifically, since the Wronskian between $I_{a}(z)$ and $K_{a}(z)$ is
\begin{equation*}
\mathcal{W}\left\{K_{a}(z),I_{a}(z)\right\}
\coloneqq
K_{a}(z)\dfrac{d}{dz}I_{a}(z)
-
I_{a}(z)\dfrac{d}{dz}K_{a}(z)
=
\dfrac{1}{z},
\end{equation*}
as given, e.g., by~\cite[Formula~9.6.15,~p.~375]{Abramowitz+Stegun:Handbook1964}, the basic variation of parameters formula asserts, after some calculation, that the function
\begin{equation*}
\ell^{\mathrm{(p)}}(u)
\coloneqq
8\lambda\,\Biggl\{I_{\xi(\lambda)}(u)\int^{u}e^{-\tfrac{A}{8}x^2}K_{\xi(\lambda)}(x)\,\dfrac{dx}{x^2}
-
K_{\xi(\lambda)}(u)\int^{u}e^{-\tfrac{A}{8}x^2}I_{\xi(\lambda)}(x)\,\dfrac{dx}{x^2}\Biggr\},
\end{equation*}
when defined, solves the nonhomogeneous equation~\eqref{eq:ell-ode}. Parenthetically, it is worth nothing that, just as the Laplace transform $\mathcal{L}_Q(s)$ should be, by definition~\eqref{eq:L-def} and Remark~\ref{rem:xi-symmetry}, the above function $\ell^{\mathrm{(p)}}(u)$ is, too, an even function of $\xi(\lambda)$, because
\begin{equation}\label{eq:BesselK-BesselI-relation}
K_{a}(z)
=
\pi\,\dfrac{I_{-a}(z)-I_{a}(z)}{2\sin(\pi a)},
\end{equation}
as given, e.g., by~\cite[Identity~9.6.2, p.~375]{Abramowitz+Stegun:Handbook1964}.

The problem now is to understand whether the two indefinite integrals involved in the above function $\ell^{\mathrm{(p)}}(u)$ can be turned into {\em convergent} {\em definite} integrals, so that the result is a well-defined function that still satisfies the nonhomogeneous equation~\eqref{eq:ell-ode}. To that end, it can be gleaned, e.g., from~\cite[p.~99]{Bateman+Erdelyi:Book1953v1}, that
\begin{equation*}
I_{a}(z)
\sim
\dfrac{1}{\sqrt{2\pi z}}\,e^{z}
,
\;
\text{as}
\;
\vert z\vert\to+\infty
,
\;\;
\text{and}
\;\;
K_{a}(z)
\sim
\sqrt{\dfrac{\pi}{2z}}\,e^{-z}
,
\;
\text{as}
\;
\vert z\vert\to+\infty,
\end{equation*}
which, in conjunction with Remark~\ref{rem:xi-complex-real}, enables one to see that the integrals
\begin{equation}\label{eq:IK-ints-z-to-infty}
\int_{z}^{+\infty}e^{-\tfrac{A}{8}x^2}I_{\xi(\lambda)}(x)\,\dfrac{dx}{x^2}
\;\;
\text{and}
\;\;
\int_{z}^{+\infty}e^{-\tfrac{A}{8}x^2}K_{\xi(\lambda)}(x)\,\dfrac{dt}{x^2}
\end{equation}
are both convergent for any $z>0$, but divergent for $z=0$. As a result, one can conclude that the function
\begin{equation*}
\ell^{\mathrm{(p)}}(u)
\coloneqq
8\lambda\,\Biggl\{I_{\xi(\lambda)}(u)\int_{u}^{+\infty}e^{-\tfrac{A}{8}x^2}K_{\xi(\lambda)}(x)\,\dfrac{dx}{x^2}
-
K_{\xi(\lambda)}(u)\int_{u}^{+\infty}e^{-\tfrac{A}{8}x^2}I_{\xi(\lambda)}(x)\,\dfrac{dx}{x^2}\Biggr\},
\end{equation*}
is a well-defined, valid particular solution to equation~\eqref{eq:ell-ode}; note the similarity of $\ell^{\mathrm{(p)}}(u)$ to the right-hand side of identity~\eqref{eq:KdF-KI-identity-MillerMoskowitz98}.

We are now in a position to claim that the general solution to equation~\eqref{eq:L-ode} is of the form
\begin{equation}\label{eq:L-ode-gen-soln}
\begin{split}
L(s)
&=
C_{1}\,2\sqrt{2s}\,I_{\xi(\lambda)}(t)+C_{2}\,2\sqrt{2s}\, K_{\xi(\lambda)}(2\sqrt{2s})
+
\\
&\qquad\qquad
+
8\lambda\,\Biggl\{2\sqrt{2s}\,I_{\xi(\lambda)}(2\sqrt{2s})\int_{2\sqrt{2s}}^{+\infty}e^{-\tfrac{A}{8}t^2}K_{\xi(\lambda)}(t)\,\dfrac{dt}{t^2}
-
\\
&\qquad\qquad\qquad\qquad\qquad
-
2\sqrt{2s}\,K_{\xi(\lambda)}(2\sqrt{2s})\int_{2\sqrt{2s}}^{+\infty}e^{-\tfrac{A}{8}t^2}I_{\xi(\lambda)}(t)\,\dfrac{dt}{t^2}\Biggr\},
\;\;
s\ge0,
\end{split}
\end{equation}
where $C_1$ and $C_2$ are arbitrary constants, each independent of $s$, but possibly dependent on $A$. The only question left to be considered is that of ``pinning down'' the two constants $C_1$ and $C_2$ so as to make the foregoing $L(s)$ satisfy the necessary boundary conditions.

With regard to fitting the boundary conditions, let us first examine the behavior of $L(s)$ given by~\eqref{eq:L-ode-gen-soln} in the limit as $s\to0+$. To that end, from the small-argument asymptotics~\eqref{eq:BesselI-small-arg-asymptotics} of the modified Bessel $I$ function, and the derivative formula
\begin{equation}\label{eq:BesselI-deriv}
\dfrac{d}{dz}I_{a}(z)
=
I_{a+1}(z)+\dfrac{a}{z}I_{a}(z),
\end{equation}
as given, e.g., by~\cite[Identity~8.486.4,~p.~937]{Gradshteyn+Ryzhik:Book2014}, we obtain
\begin{equation}\label{eq:BesselI-intBesselK-lim0}
\begin{split}
\lim_{u\to0+}\Biggl\{u\,&I_{\xi(\lambda)}(u)\int_{u}^{+\infty}e^{-\tfrac{A}{8}x^2}K_{\xi(\lambda)}(x)\,\dfrac{dx}{x^2}\Biggr\}
=
\\
&
=
\lim_{u\to0+}\Biggl\{\left.\Biggl(\int_{u}^{+\infty}e^{-\tfrac{A}{8}x^2}K_{\xi(\lambda)}(x)\,\dfrac{dx}{x^2}\Biggr)\right/\Biggl(\dfrac{1}{u\,I_{\xi(\lambda)}(u)}\Biggr)\Biggr\}
\\
&
\stackrel{(*)}{=}
\lim_{u\to0+}\Biggl\{\left.\Biggl(-e^{-\tfrac{A}{8}u^2}K_{\xi(\lambda)}(u)\,\dfrac{1}{u^2}\Biggr)\right/\Biggl(-\dfrac{1}{\big[u\,I_{\xi(\lambda)}(u)\big]^2}\dfrac{d}{du}\big[u\,I_{\xi(\lambda)}(u)\big]\Biggr)\Biggr\}
\\
&
=
\lim_{u\to0+}\Biggl\{\dfrac{e^{-\tfrac{A}{8}u^2}\big[I_{\xi(\lambda)}(u)\big]^{2}K_{\xi(\lambda)}(u)}{\big[1+\xi(\lambda)\big]\,I_{\xi(\lambda)}(u)+\big[\xi(\lambda)\big]\,u\,I_{\xi(\lambda)+1}(u)}\Biggr\}
\\
&=
\dfrac{1}{2\xi(\lambda)\big[1+\xi(\lambda)\big]}
,
\end{split}
\end{equation}
where equality $(*)$ is due to L'H\^{o}pital's rule, applicable because the corresponding integral of the modified Bessel $K$ function is divergent when the lower limit of integration is zero.

Likewise, from the small-argument asymptotics~\eqref{eq:BesselK-small-arg-asymptotics} of the modified Bessel $K$ function, its symmetry with respect to the order, i.e., $K_{a}(z)=K_{-a}(z)$, trivially implied by~\eqref{eq:BesselK-BesselI-relation}, and the derivative formula
\begin{equation}\label{eq:BesselK-deriv}
\dfrac{d}{dz}K_{a}(z)
=
-K_{a-1}(z)-\dfrac{a}{z}K_{a}(z)
=
-K_{1-a}(z)-\dfrac{a}{z}K_{a}(z),
\end{equation}
as given, e.g., by~\cite[Identity~8.486.12,~p.~938]{Gradshteyn+Ryzhik:Book2014}, we obtain
\begin{equation}\label{eq:BesselK-intBesselI-lim0}
\begin{split}
\lim_{u\to0+}\Biggl\{u\,K_{\xi(\lambda)}(u)&\int_{u}^{+\infty}e^{-\tfrac{A}{8}x^2}I_{\xi(\lambda)}(x)\,\dfrac{dx}{x^2}\Biggr\}
=
\\
&
=
\lim_{u\to0+}\Biggl\{\dfrac{e^{-\tfrac{A}{8}u^2}\big[K_{\xi(\lambda)}(u)\big]^{2}I_{\xi(\lambda)}(u)}{\big[1-\xi(\lambda)\big]\,K_{\xi(\lambda)}(u)-\big[\xi(\lambda)\big]\,u\,K_{1-\xi(\lambda)}(u)}\Biggr\}
\\
&=
\dfrac{1}{2\xi(\lambda)\big[1-\xi(\lambda)\big]}
,
\end{split}
\end{equation}
where we again used L'H\^{o}pital's rule, applicable because the corresponding integral of the modified Bessel $I$ function is divergent when the lower limit of integration is zero.

Next, from the foregoing two limits~\eqref{eq:BesselI-intBesselK-lim0} and~\eqref{eq:BesselK-intBesselI-lim0}, and~\eqref{eq:xi-def} we obtain
\begin{equation*}
\begin{split}
\lim_{s\to0+}
&\Biggl\{2\sqrt{2s}\,K_{\xi(\lambda)}(2\sqrt{2s})\int_{2\sqrt{2s}}^{+\infty}e^{-\tfrac{A}{8}x^2}I_{\xi(\lambda)}(x)\,\dfrac{dx}{x^2}
-
\\
&\qquad\qquad\qquad\qquad
-
2\sqrt{2s}\,I_{\xi(\lambda)}(2\sqrt{2s})\int_{2\sqrt{2s}}^{+\infty}e^{-\tfrac{A}{8}x^2}K_{\xi(\lambda)}(x)\,\dfrac{dt}{x^2}\Biggr\}
=
\\
&\qquad
=
\dfrac{1}{2\xi(\lambda)}\Biggl\{\dfrac{1}{1-\xi(\lambda)}-\dfrac{1}{1+\xi(\lambda)}\Biggr\}
=
\dfrac{1}{8\lambda},
\end{split}
\end{equation*}
whence, recalling again~\eqref{eq:BesselI-small-arg-asymptotics} and~\eqref{eq:BesselK-small-arg-asymptotics}, one finds that $L(s)$ given by~\eqref{eq:L-ode-gen-soln} converges to unity as $s\to0+$, whatever $C_1$ and $C_2$ be. Put another way, it turns out that $\lim_{s\to0+}L(s)=1$, for any choice of $C_1$ and $C_2$.% involved in $L(s)$ given by~\eqref{eq:L-ode-gen-soln}.
%\begin{equation*}
%\lim_{s\to0+}\Biggl\{2\sqrt{2s}\,I_{\xi(\lambda)}(2\sqrt{2s})\int_{2\sqrt{2s}}^{+\infty}e^{-\tfrac{A}{8}x^2}K_{\xi(\lambda)}(x)\,\dfrac{dx}{x^2}\Biggr\}
%=
%\dfrac{1}{2\xi(\lambda)\big[1+\xi(\lambda)\big]},
%\end{equation*}
%and
%\begin{equation*}
%\lim_{s\to0+}\Biggl\{2\sqrt{2s}\,K_{\xi(\lambda)}(2\sqrt{2s})\int_{2\sqrt{2s}}^{+\infty}e^{-\tfrac{A}{8}x^2}I_{\xi(\lambda)}(x)\,\dfrac{dx}{x^2}\Biggr\}
%=
%\dfrac{1}{2\xi(\lambda)\big[1-\xi(\lambda)\big]},
%\end{equation*}
%whence

Let us switch attention to the behavior of $L(s)$ for large values of $s$. To that end, from~\eqref{eq:IK-ints-z-to-infty} and~\eqref{eq:BesselK-deriv} we obtain
\begin{equation*}
\begin{split}
\lim_{u\to+\infty}\Biggl\{u\,I_{\xi(\lambda)}(u)&\int_{u}^{+\infty}e^{-\tfrac{A}{8}x^2}K_{\xi(\lambda)}(x)\,\dfrac{dx}{x^2}\Biggr\}
=
\\
&
=
\lim_{u\to+\infty}\Biggl\{\dfrac{e^{-\tfrac{A}{8}u^2}\big[I_{\xi(\lambda)}(u)\big]^{2}K_{\xi(\lambda)}(u)}{\big[1+\xi(\lambda)\big]\,I_{\xi(\lambda)}(u)+\big[\xi(\lambda)\big]\,u\,I_{\xi(\lambda)+1}(u)}\Biggr\}
=
0
\end{split}
\end{equation*}
and
\begin{equation*}
\lim_{u\to+\infty}\Biggl\{u\,K_{\xi(\lambda)}(u)\int_{u}^{+\infty}e^{-\tfrac{A}{8}x^2}I_{\xi(\lambda)}(x)\,\dfrac{dx}{x^2}\Biggr\}
=
0,
\end{equation*}
so that the limit of $L(s)$ given by~\eqref{eq:L-ode-gen-soln} as $s\to+\infty$ can now be seen to be infinite if $C_1\neq0$, or $0$ if $C_1=0$. Hence, with $C_1$ set to $0$, our function $L(s)$ simplifies down to
\begin{equation}\label{eq:L-ode-gen-soln-C2only}
\begin{split}
L(s)
&=
C_{2}\,2\sqrt{2s}\, K_{\xi(\lambda)}(2\sqrt{2s})
+
\\
&\qquad\qquad
+
8\lambda\,\Biggl\{2\sqrt{2s}\,K_{\xi(\lambda)}(2\sqrt{2s})\int_{2\sqrt{2s}}^{+\infty}e^{-\tfrac{A}{8}t^2}I_{\xi(\lambda)}(t)\,\dfrac{dt}{t^2}
-
\\
&\qquad\qquad\qquad\qquad\qquad
-
2\sqrt{2s}\,I_{\xi(\lambda)}(2\sqrt{2s})\int_{2\sqrt{2s}}^{+\infty}e^{-\tfrac{A}{8}t^2}K_{\xi(\lambda)}(t)\,\dfrac{dt}{t^2}\Biggr\},
\;\;
s\ge0,
\end{split}
\end{equation}
where $C_2$ is still to be found.

To ``pin down'' $C_2$ one may invoke~\eqref{eq:L-ode-bnd-condition-Mn} for any one value of $n\in\mathbb{N}$. The easiest choice is $n=1$, so that, in view of~\eqref{eq:QSD-Mn-answer-PowerSeries}, we obtain
\begin{equation}\label{eq:L-deriv1-s0-M1-bnd-cond}
\left.\left[\dfrac{d}{ds}L(s)\right]\right|_{s=0}
=
-\mathfrak{M}_{1}
=
\dfrac{1}{\lambda}-A,
\end{equation}
which is what we now intend to make $L(s)$ given by~\eqref{eq:L-ode-gen-soln-C2only} satisfy so as to get an equation to subsequently recover $C_2$ from.

To find $dL(s)/ds$, first recall the symmetry $K_{a}(z)=K_{-a}(z)$, and then devise~\eqref{eq:BesselI-deriv} and~\eqref{eq:BesselK-deriv} and integration by parts to establish the indefinite integral identities
\begin{equation*}
\begin{split}
\int^{u}e^{-\tfrac{A}{8}x^2}I_{a}(x)\,\dfrac{dx}{x^{k}}
&=
\dfrac{1}{a+1-k}\Biggl\{u^{1-k}\,e^{-\tfrac{A}{8}u^2}I_{a}(u)
+
\\
&\qquad
+
\dfrac{A}{4}\int^{u}x^2 e^{-\tfrac{A}{8}x^2}I_{a}(x)\,\dfrac{dx}{x^{k}}-
\int^{u}x\, e^{-\tfrac{A}{8}x^2}I_{a+1}(x)\,\dfrac{dx}{x^{k}}\Biggr\},
\end{split}
\end{equation*}
and
\begin{equation*}
\begin{split}
\int^{u}e^{-\tfrac{A}{8}x^2}K_{a}(x)\,\dfrac{dx}{x^{k}}
&=
\dfrac{1}{a-1+k}\Biggl\{u^{1+k}\,e^{-\tfrac{A}{8}t^2}K_{a}(u)
+
\\
&\qquad
+
\dfrac{A}{4}\int^{u}x^2 e^{-\tfrac{A}{8}x^2}K_{a}(x)\,\dfrac{dx}{x^{k}}
+
\int^{u}x\, e^{-\tfrac{A}{8}x^2}K_{1-a}(x)\,\dfrac{dx}{x^{k}}\Biggr\},
\end{split}
\end{equation*}
so that
\begin{equation*}
\begin{split}
\int_{u}^{+\infty}e^{-\tfrac{A}{8}x^2}I_{\xi(\lambda)}(x)\,\dfrac{dx}{x^{2}}
&=
\dfrac{1}{1-\xi(\lambda)} u^{-1}\,e^{-\tfrac{A}{8}u^2}I_{\xi(\lambda)}(u)
+
\\
&\qquad
+
\dfrac{1}{8\lambda}e^{-\tfrac{A}{8}u^2}I_{\xi(\lambda)+1}(u)
-
\\
&\qquad\qquad
-
\dfrac{1}{8\lambda}\left(\dfrac{1}{1+\xi(\lambda)}-\dfrac{A}{4}\right)u\,e^{-\tfrac{A}{8}u^2}I_{\xi(\lambda)}(u)
+
\\
&\qquad\qquad
+
\dfrac{1}{8\lambda}\left(\dfrac{1}{1+\xi(\lambda)}-\dfrac{A}{4}\right)\dfrac{A}{4}\int_{u}^{+\infty}x^2 e^{-\tfrac{A}{8}x^2}I_{\xi(\lambda)}(x)\,dx
-
\\
&\qquad\qquad\qquad
-
\dfrac{1}{8\lambda}\dfrac{1}{1+\xi(\lambda)}\int_{u}^{+\infty}x\, e^{-\tfrac{A}{8}x^2}I_{\xi(\lambda)+1}(x)\,dx,
\end{split}
\end{equation*}
and
\begin{equation*}
\begin{split}
\int_{u}^{+\infty}e^{-\tfrac{A}{8}x^2}K_{\xi(\lambda)}(x)\,\dfrac{dx}{x^{2}}
&=
\dfrac{1}{1+\xi(\lambda)} u^{-1}\,e^{-\tfrac{A}{8}u^2}K_{\xi(\lambda)}(u)
-
\\
&\qquad
-
\dfrac{1}{8\lambda}e^{-\tfrac{A}{8}u^2}K_{1-\xi(\lambda)}(u)
-
\\
&\qquad\qquad
-
\dfrac{1}{8\lambda}\left(\dfrac{1}{1-\xi(\lambda)}-\dfrac{A}{4}\right)u\,e^{-\tfrac{A}{8}u^2}K_{\xi(\lambda)}(u)
+
\\
&\qquad\qquad
+
\dfrac{1}{8\lambda}\left(\dfrac{1}{1-\xi(\lambda)}-\dfrac{A}{4}\right)\dfrac{A}{4}\int_{u}^{+\infty}x^2 e^{-\tfrac{A}{8}x^2}K_{\xi(\lambda)}(x)\,dx
+
\\
&\qquad\qquad\qquad
+
\dfrac{1}{8\lambda}\dfrac{1}{1-\xi(\lambda)}\int_{u}^{+\infty}x\, e^{-\tfrac{A}{8}x^2}K_{1-\xi(\lambda)}(x)\,dx,
\end{split}
\end{equation*}
which is sufficient to compute the limit of $dL(s)/ds$ as $s\to0+$. Specifically, from~\eqref{eq:L-deriv1-s0-M1-bnd-cond}, after quite a bit of algebra involving repeated use of~\eqref{eq:BesselI-small-arg-asymptotics} and~\eqref{eq:BesselK-small-arg-asymptotics}, we find that
\begin{equation*}
C_2
=
\dfrac{1}{1+\xi(\lambda)}\int_{0}^{+\infty}x\,e^{-\tfrac{A}{8}x^2}I_{\xi(\lambda)+1}(x)\,dx
-
\left(\dfrac{1}{1+\xi(\lambda)}-\dfrac{A}{4}\right)\dfrac{A}{4}\int_{0}^{+\infty}x^2e^{-\tfrac{A}{8}x^2}I_{\xi(\lambda)}(x)\,dx,
\end{equation*}
which can be brought to a more explicit form by appealing to~\cite[Identity~6.643.2,~p.~716]{Gradshteyn+Ryzhik:Book2014}, i.e., the definite integral
\begin{equation*}
\int_{0}^{+\infty} x^{\kappa} e^{-c x} I_{2a}(2b\sqrt{x})\dfrac{dx}{\sqrt{x}}
=
\dfrac{\Gamma(\kappa+a+1/2)}{\Gamma(2a+1)}\,e^{\tfrac{b^2}{2c}}\,\dfrac{1}{bc^{\kappa}}\, M_{-\kappa,a}\left(\dfrac{b^2}{c}\right),
\end{equation*}
valid for $\Re(\kappa+a+1/2)>0$; recall that $M_{a,b}(z)$ here denotes the Whittaker $M$ function. The foregoing definite integral immediately gives
\begin{equation*}
\int_{0}^{+\infty}x^2e^{-\tfrac{A}{8}x^2}I_{\xi(\lambda)}(x)\,dx
=
e^{\tfrac{1}{A}}\,\dfrac{\Gamma(1+[\xi(\lambda)+1]/2)}{\Gamma(\xi(\lambda)+1)}\, \dfrac{8}{A}\, M_{-1,\tfrac{1}{2}\xi(\lambda)}\left(\dfrac{2}{A}\right),
\end{equation*}
and
\begin{equation*}
\int_{0}^{+\infty}x\,e^{-\tfrac{A}{8}x^2}I_{\xi(\lambda)+1}(x)\,dx
=
e^{\tfrac{1}{A}}\,\dfrac{\Gamma(1+[\xi(\lambda)+1]/2)}{\Gamma(\xi(\lambda)+2)} \sqrt{\dfrac{8}{A}}\, M_{-\tfrac{1}{2},\tfrac{1}{2}\xi(\lambda)+\tfrac{1}{2}}\left(\dfrac{2}{A}\right),
\end{equation*}
so that
\begin{equation*}
\begin{split}
C_2
&=
e^{\tfrac{1}{A}}\,\dfrac{\xi(\lambda)+1}{\Gamma(\xi(\lambda)+1)}\,\Gamma\left(\dfrac{\xi(\lambda)+1}{2}\right)\Biggl\{\dfrac{1}{\big[1 +\xi(\lambda)\big]^2}\sqrt{\dfrac{2}{A}}\,M_{-\tfrac{1}{2},\tfrac{1}{2}\xi(\lambda)+\tfrac{1}{2}}\left(\dfrac{2}{A}\right)
-
\\
&\qquad\qquad
-\left(\dfrac{1}{1 + \xi(\lambda)} - \dfrac{A}{4}\right)M_{-1, \tfrac{1}{2}\xi(\lambda)}\left(\dfrac{2}{A}\right)\Biggr\},
\end{split}
\end{equation*}
where we also used the factorial property of the Gamma function $\Gamma(z+1)=z\,\Gamma(z)$. Now, from~\cite[Identity~13.4.28,~p.~507]{Abramowitz+Stegun:Handbook1964}, i.e., the identity
\begin{equation*}
2b\,M_{a-\tfrac{1}{2},b-\tfrac{1}{2}}(z)-\sqrt{z}\,M_{a,b}(z)-2b\,M_{a+\tfrac{1}{2},b-\tfrac{1}{2}}(z)
=
0,
\end{equation*}
we find at once that
\begin{equation*}
\sqrt{\dfrac{2}{A}}\,M_{-\tfrac{1}{2},\tfrac{1}{2}\xi(\lambda)+\tfrac{1}{2}}\left(\dfrac{2}{A}\right)
=
\big[1+\xi(\lambda)\big]\Biggl\{M_{-1,\tfrac{1}{2}\xi(\lambda)}\left(\dfrac{2}{A}\right)-M_{0,\tfrac{1}{2}\xi(\lambda)}\left(\dfrac{2}{A}\right)\Biggr\},
\end{equation*}
whence
\begin{equation*}
C_2
=
e^{\tfrac{1}{A}}\,\dfrac{1}{\Gamma(\xi(\lambda)+1)}\,\Gamma\left(\dfrac{\xi(\lambda)+1}{2}\right)\dfrac{A}{2}
\Biggl\{\dfrac{1+\xi(\lambda)}{2}\,M_{-1,\tfrac{1}{2}\xi(\lambda)}\left(\dfrac{2}{A}\right)-\dfrac{2}{A}\,M_{0,\tfrac{1}{2}\xi(\lambda)}\left(\dfrac{2}{A}\right)\Biggr\}
,
\end{equation*}
which is equivalent to
\begin{equation*}
C_2
=
e^{\tfrac{1}{A}}\,\dfrac{1+\xi(\lambda)}{2\,\Gamma(\xi(\lambda)+1)}\,\Gamma\left(\dfrac{\xi(\lambda)+1}{2}\right)\dfrac{A}{2}\,M_{1,\tfrac{1}{2}\xi(\lambda)}\left(\dfrac{2}{A}\right),
\end{equation*}
because of~\cite[Identity~12.4.29,~p.~507]{Abramowitz+Stegun:Handbook1964}, i.e., the recurrence
\begin{equation*}
(1+2b+2a)M_{a+1,b}(z)-(1+2b-2a)M_{a-1,b}(z)
=
2(2a-z)M_{a,b}(z),
\end{equation*}
whereby
\begin{equation*}
\dfrac{1+\xi(\lambda)}{2}\,M_{-1,\tfrac{1}{2}\xi(\lambda)}\left(\dfrac{2}{A}\right)-\dfrac{2}{A}\,M_{0,\tfrac{1}{2}\xi(\lambda)}\left(\dfrac{2}{A}\right)
=
\dfrac{1+\xi(\lambda)}{2}\,M_{1,\tfrac{1}{2}\xi(\lambda)}\left(\dfrac{2}{A}\right).
\end{equation*}

Next, since the Wronskian between $M_{a,b}(z)$ and $W_{a,b}(z)$ is
\begin{equation*}
\mathcal{W}\{M_{a,b}(z),W_{a,b}(z)\}
\coloneqq
M_{a,b}(z)\dfrac{d}{dz}W_{a,b}(z)
-
W_{a,b}(z)\dfrac{d}{dz}M_{a,b}(z)
=
-\dfrac{\Gamma(1+2b)}{\Gamma(1/2+b-a)},
\end{equation*}
as given, e.g., by~\cite[Formula~(2.4.27),~p.~26]{Slater:Book1960}, and because
\begin{equation*}
W_{a-1,b}(z)
=
\dfrac{z-2a}{2(1/2+b-a)(1/2-b-a)}\,W_{a,b}(z)+\dfrac{z}{(1/2+b-a)(1/2-b-a)}\,\dfrac{d}{dz}W_{a,b}(z),
\end{equation*}
as given, e.g., by~\cite[Formula~(2.4.21),~p.~25]{Slater:Book1960}, it follows that
\begin{equation*}
\lambda A\,\Gamma\left(\dfrac{\xi(\lambda)-1}{2}\right)W_{0,\tfrac{1}{2}\xi(\lambda)}\left(\dfrac{2}{A}\right)\,M_{1,\tfrac{1}{2}\xi(\lambda)}\left(\dfrac{2}{A}\right)
=
-\Gamma(\xi(\lambda)+1),
\end{equation*}
where we also appealed to equation~\eqref{eq:lambda-eqn}.

Putting all of the above together, we can finally conclude that
\begin{equation*}
C_2
=
1\left/\Biggl\{e^{\tfrac{1}{A}}\,W_{0,\tfrac{1}{2}\xi(\lambda)}\left(\dfrac{2}{A}\right)\Biggr\}\right.,
\end{equation*}
which is precisely the normalizing factor in the quasi-stationary distribution's formulae~\eqref{eq:QSD-pdf-answer} and~\eqref{eq:QSD-cdf-answer}.

%\begin{equation*}
%\begin{split}
%&I_{\xi(\lambda)}(2\sqrt{2s})\int_{2\sqrt{2s}}^{+\infty}e^{-\tfrac{A}{8}t^2}K_{\xi(\lambda)}(t)\,\dfrac{dt}{t^2}
%-
%K_{\xi(\lambda)}(2\sqrt{2s})\int_{2\sqrt{2s}}^{+\infty}e^{-\tfrac{A}{8}t^2}I_{\xi(\lambda)}(t)\,\dfrac{dt}{t^2}
%=
%\\
%&\quad
%=
%\dfrac{1}{2}\sqrt{2s}\,e^{-sA}\dfrac{2\xi(\lambda)}{8\lambda}\left(A-\dfrac{1}{\lambda}\right)I_{\xi(\lambda)}(2\sqrt{2s})\,K_{\xi(\lambda)}(2\sqrt{2s})
%+
%\\
%&\qquad
%+
%\dfrac{I_{\xi(\lambda)}(2\sqrt{2s})}{1-\xi(\lambda)}\Biggl\{\left(\dfrac{1}{1-\xi(\lambda)}-\dfrac{A}{4}\right)\dfrac{A}{4}\int_{2\sqrt{2s}}^{+\infty}t^2e^{-\tfrac{A}{8}t^2}K_{\xi(\lambda)}(t)\,dt
%+
%\\
%&\qquad\qquad\qquad
%+
%\dfrac{1}{1-\xi(\lambda)}\int_{2\sqrt{2s}}^{+\infty}te^{-\tfrac{A}{8}t^2}K_{1-\xi(\lambda)}(t)\,dt\Biggr\}
%+
%\\
%&\qquad
%+
%\dfrac{(1-\xi(\lambda))K_{\xi(\lambda)}(2\sqrt{2s})-2\sqrt{2s}\,K_{1-\xi(\lambda)}(2\sqrt{2s})}{8\lambda}\times
%\\
%&\qquad\qquad\qquad
%\times
%\Biggl\{\left(\dfrac{1}{1-\xi(\lambda)}-\dfrac{A}{4}\right)\dfrac{A}{4}\int_{2\sqrt{2s}}^{+\infty}t^2e^{-\tfrac{A}{8}t^2}I_{\xi(\lambda)}(t)\,dt
%-
%\\
%&\qquad\qquad\qquad\qquad
%-
%\dfrac{1}{1-\xi(\lambda)}\int_{2\sqrt{2s}}^{+\infty}te^{-\tfrac{A}{8}t^2}I_{1+\xi(\lambda)}(t)\,dt\Biggr\}
%\end{split}
%\end{equation*}

We have now solved the differential equation~\eqref{eq:L-ode} and obtained yet another representation of the Laplace transform $\mathcal{L}_{Q}\{q_A(x);x\to s\}(s,A)$ of the quasi-stationary distribution~\eqref{eq:QSD-def}.
\begin{lemma}
For every $A>0$ fixed, the Laplace transform $\mathcal{L}_{Q}\{q_A(x);x\to s\}(s,A)$ of the quasi-stationary distribution~\eqref{eq:QSD-pdf-answer}--\eqref{eq:QSD-cdf-answer} is given by
\begin{equation}\label{eq:lem-L-KI-formula}
\begin{split}
\mathcal{L}_{Q}\{q_A(x)&;x\to s\}(s,A)
=
\\
&=
2\sqrt{2s}\,K_{\xi(\lambda)}(2\sqrt{2s})\left/\Biggl\{e^{-\tfrac{1}{A}}\,W_{0,\tfrac{1}{2}\xi(\lambda)}\left(\dfrac{2}{A}\right)\Biggr\}\right.
+
\\
&\qquad
+
8\lambda\Biggl\{2\sqrt{2s}\,K_{\xi(\lambda)}(2\sqrt{2s})\int_{2\sqrt{2s}}^{+\infty}e^{-\tfrac{A}{8}x^2}I_{\xi(\lambda)}(x)\,\dfrac{dx}{x^2}
-
\\
&\qquad\qquad\qquad
-2\sqrt{2s}\,I_{\xi(\lambda)}(2\sqrt{2s})\int_{2\sqrt{2s}}^{+\infty}e^{-\tfrac{A}{8}x^2}K_{\xi(\lambda)}(x)\,\dfrac{dx}{x^2}\Biggr\},
\end{split}
\end{equation}
where $s\ge0$, and $\lambda\equiv\lambda_{A}\;(>0)$ is determined by~\eqref{eq:lambda-eqn} while $\xi(\lambda)$ is defined in~\eqref{eq:xi-def}; recall also that $W_{a,b}(z)$ denotes the Whittaker $W$ function, and $I_{a}(z)$ and $K_{a}(z)$ denote the modified Bessel functions of the first and second kinds, respectively.
\end{lemma}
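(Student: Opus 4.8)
The plan is to treat the statement as the endpoint of the direct ODE computation carried out in the discussion above, and to formalise that computation into a clean proof by (i) exhibiting the general solution of the governing equation, (ii) fixing the two integration constants through boundary data, and (iii) invoking a uniqueness argument to identify the resulting function with the Laplace transform. First I would record that, per the discussion surrounding \eqref{eq:L-ode}, the transform $\mathcal{L}_{Q}(s)$ solves the second-order linear equation \eqref{eq:L-ode}, and that the substitution $s\mapsto u=2\sqrt{2s}$ with $L(s)=u\,\ell(u)$ converts \eqref{eq:L-ode} into the nonhomogeneous modified Bessel equation \eqref{eq:ell-ode}. Its homogeneous solutions are $I_{\xi(\lambda)}$ and $K_{\xi(\lambda)}$, and variation of parameters furnishes the particular solution built from the two convergent improper integrals; assembling these yields the general solution \eqref{eq:L-ode-gen-soln} with free constants $C_1,C_2$.

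Second, I would fix $C_1$ and $C_2$ from the boundary data. The large-$s$ analysis shows that the requirement $\lim_{s\to+\infty}L(s)=0$ forces $C_1=0$, reducing \eqref{eq:L-ode-gen-soln} to \eqref{eq:L-ode-gen-soln-C2only}. To obtain $C_2$ I would impose the single condition $L'(0)=-\mathfrak{M}_{1}$, i.e.\ the $n=1$ instance of \eqref{eq:L-ode-bnd-condition-Mn}, differentiate \eqref{eq:L-ode-gen-soln-C2only} using the Bessel derivative formulae and the near-origin asymptotics, and solve for $C_2$; this identifies $C_2$ with the reciprocal of the normalising constant $e^{-1/A}\,W_{0,\tfrac{1}{2}\xi(\lambda)}(2/A)$ appearing in \eqref{eq:QSD-pdf-answer}. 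Substituting $C_1=0$ and this $C_2$ back into \eqref{eq:L-ode-gen-soln-C2only} then reproduces \eqref{eq:lem-L-KI-formula} verbatim.

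Third, I would supply the uniqueness argument that justifies calling the constructed $L(s)$ the Laplace transform. Equation \eqref{eq:L-ode} is of order two, so its solutions form a two-parameter family; the two constraints I have used---decay as $s\to+\infty$ and the prescribed value of $L'(0)$---single out one member of that family. Since the genuine transform $\mathcal{L}_{Q}(s)$ solves \eqref{eq:L-ode}, vanishes at infinity by dominated convergence, and satisfies $\mathcal{L}_{Q}'(0)=-\mathfrak{M}_{1}$ by \eqref{eq:Mn-L-deriv-s0-formula}, it meets both constraints and therefore coincides with $L(s)$. As a built-in consistency check, the independently verified fact that $\lim_{s\to0+}L(s)=1$ holds for this solution regardless of the constants confirms that the remaining boundary condition is automatically honoured, exactly as it must be for the transform of a probability law.

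The step I expect to be the real obstacle is the determination of $C_2$. The difficulty is computational rather than conceptual: one must reduce the definite integrals $\int_{0}^{\infty} x^2 e^{-\tfrac{A}{8}x^2}I_{\xi(\lambda)}(x)\,dx$ and $\int_{0}^{\infty} x\,e^{-\tfrac{A}{8}x^2}I_{\xi(\lambda)+1}(x)\,dx$ to Whittaker $M$ functions, and then collapse the emerging combination of $M_{-1}$, $M_{0}$, and $M_{1}$ (at index $\tfrac{1}{2}\xi(\lambda)$ and argument $2/A$) into a single $W_{0,\tfrac{1}{2}\xi(\lambda)}(2/A)$ via the contiguous Whittaker relations, the Wronskian identity, and---crucially---the defining equation \eqref{eq:lambda-eqn}. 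It is precisely here that the cancellation turning $C_2$ into the reciprocal of the quasi-stationary normaliser takes place, and carrying it out cleanly is the delicate heart of the argument.
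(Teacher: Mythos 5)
Your proposal is correct and follows essentially the same route as the paper: the paper's derivation of \eqref{eq:lem-L-KI-formula} is exactly the ODE argument you outline---pass from \eqref{eq:L-ode} to the nonhomogeneous Bessel equation \eqref{eq:ell-ode}, build the general solution \eqref{eq:L-ode-gen-soln} by variation of parameters, kill $C_1$ via decay at infinity, and pin down $C_2$ through the $n=1$ moment condition \eqref{eq:L-deriv1-s0-M1-bnd-cond} and the Whittaker-function reductions, which you rightly identify as the delicate computational core. Your added explicit uniqueness step (matching the constructed solution to the genuine transform via its boundary behavior) merely formalizes what the paper leaves implicit, so it is a welcome but not substantively different refinement.
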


Yet again, from the symmetry of the Whittaker $W$ with respect to the second index, i.e., $W_{a,b}(z)=W_{a,-b}(z)$, one can see that, just like formulae~\eqref{eq:lem-L-KdF-formula1} and~\eqref{eq:lem-L-KdF-formula2} obtained earlier, the new formula~\eqref{eq:lem-L-KI-formula} is also symmetric with respect to $\xi(\lambda)$, as it should be, by definition~\eqref{eq:L-def} and Remark~\ref{rem:xi-symmetry}. However, unlike formulae~\eqref{eq:lem-L-KdF-formula1} and~\eqref{eq:lem-L-KdF-formula2}, the new formula~\eqref{eq:lem-L-KI-formula} is not only free of the Kamp\'{e} de F\'{e}riet function, but more importantly, it is valid even in the limit, as $A\to+\infty$ or as $s\to+\infty$. While the (trivial) limit as $s\to+\infty$ is of little interest, the (nontrivial) limit as $A\to+\infty$ does merit some consideration, especially in the context of quickest change-point detection~\cite{Pollak+Siegmund:B85}. Since
\begin{equation*}
\lim_{A\to+\infty} \Biggl\{e^{-\tfrac{1}{A}}\,W_{0,\tfrac{1}{2}\xi(\lambda)}\left(\dfrac{2}{A}\right)\Biggr\}
=
1,
\end{equation*}
which was observed previously in~\cite[p.~139]{Polunchenko:SA2017a} as an implication of the limits $\lim_{A\to+\infty}\lambda_{A}=0$ and $\lim_{A\to+\infty}\xi(\lambda_{A})=1$, it can be shown directly from~\eqref{eq:lem-L-KI-formula} with the aid of~\eqref{eq:BesselI-small-arg-asymptotics} that
\begin{equation*}
\lim_{A\to+\infty}\mathcal{L}_{Q}\{q_A(x);x\to s\}(s,A)
=
2\sqrt{2s}\,K_{1}(2\sqrt{2s})
\eqqcolon
\mathcal{L}_{H}(s),
\end{equation*}
for every $s\ge0$ fixed. However, in view of~\cite[Identity~(24),~p.~82]{Bateman+Erdelyi:Book1953v2}, i.e., the identity
\begin{equation*}
K_{a}(b z)
=
\dfrac{1}{2}\int_{0}^{+\infty} \left(\dfrac{b}{x}\right)^{a}e^{-\tfrac{z}{2}\bigl(x+\tfrac{b^{2}}{x}\bigr)}\dfrac{dx}{x},
\;\;
\text{valid for $\Re(z)>0$ and $\Re(b^{2} z)>0$},
\end{equation*}
the function $\mathcal{L}_{H}(s)\coloneqq 2\sqrt{2s}\,K_{1}(2\sqrt{2s})$ can be recognized to be the Laplace transform of the Shiryaev diffusion's stationary distribution defined in~\eqref{eq:SR-StDist-def} and given explicitly by~\eqref{eq:SR-StDist-answer}. That is, for every $s\ge0$ fixed, the limit of $\mathcal{L}_{Q}\{q_A(x);x\to s\}(s,A)$ as $A\to+\infty$ is precisely $\mathcal{L}_{H}(s)$, and, therefore, the stationary distribution~\eqref{eq:SR-StDist-answer} is the limit of the quasi-stationary distribution~\eqref{eq:QSD-pdf-answer}--\eqref{eq:QSD-cdf-answer} as $A\to+\infty$. This convergence of distributions (for a more general family of stochastically monotone processes) was previously established by Pollak and Siegmund in~\cite{Pollak+Siegmund:B85,Pollak+Siegmund:JAP1996}, although through an entirely different approach and with no explicit formulae.

We conclude with an admission that, in our derivation of the Laplace transform formula~\eqref{eq:lem-L-KI-formula}, we actually had to ``cut some corners''. Strictly speaking, by Remark~\ref{rem:xi-complex-real}, we should have considered separately three different cases: \begin{inparaenum}[\itshape(1)]\item $A<\tilde{A}\approx10.240465$ so that $\lambda_{A}>1/8$ and $\xi(\lambda)$ is purely imaginary; \item $A=\tilde{A}\approx10.240465$ so that $\lambda_{A}=1/8$ and $\xi(\lambda)=0$; and \item $A>\tilde{A}\approx10.240465$ so that $\lambda_{A}<1/8$ and $\xi(\lambda)$ is purely real and strictly between 0 and 1\end{inparaenum}. However, for lack of space, we only attended to the third case. The reason to distinguish the three cases is because the asymptotics of the modified Bessel $I$ and $K$ functions are highly order-dependent, and, in our specific situation, the order of either function is determined entirely by $\xi(\lambda)$. For example, the limits~\eqref{eq:BesselI-intBesselK-lim0} and~\eqref{eq:BesselK-intBesselI-lim0} are clearly false when $\xi(\lambda)=0$. Nevertheless, the end-result, viz. formula~\eqref{eq:lem-L-KI-formula}, is valid in all three cases.

%+-----------------------------------------------------------------------------------------------+%
\section{Concluding remarks}
\label{sec:remarks}

It is generally rare that quasi-stationary distributions and their characteristics lend themselves to explicit analytic evaluation. Furthermore, in the rare cases one {\em can} recover the distribution itself or its characteristics analytically, the result is usually of limited use, for the corresponding formulae, though explicit, are typically rather complex and involve special functions (or, worse yet, {\em exotic} special functions). This work, as a continuation of~\cite{Polunchenko:SA2017a} and a spin-off of~\cite{Polunchenko+etal:TPA2018}, provided an example of a situation where the distribution itself, its Laplace transform as well as the entire moment series are {\em all} obtainable analytically and in closed-form, despite the presence of special functions in all of the calculations. It is our hope that the special functions calculus heavily used in this work will aid further research on stochastic processes, an area where special functions (including those dealt with in the this paper) arise routinely.

%+-----------------------------------------------------------------------------------------------+%
\begin{acknowledgements}
The authors would like to thank the two anonymous referees for the careful reading of the manuscript and pertinent comments; the referees' constructive feedback helped substantially improve the quality of this work and shape its final form.
\end{acknowledgements}

% BibTeX users please use one of
%\bibliographystyle{spbasic}      % basic style, author-year citations
%\bibliographystyle{spmpsci}      % mathematics and physical sciences
%\bibliographystyle{spphys}       % APS-like style for physics
%\bibliography{}   % name your BibTeX data base

%\appendix
%
%\section{Miscellaneous identities on special functions}
%
%\begin{lemma}
%The equation $W_{1,a}(z)=0$ is equivalent to the equation
%\begin{equation}\label{asdasd}
%W_{1,a}
%=
%\dfrac{z}{2}W_{0,a}(z)+\dfrac{z}{4}\left\{\dfrac{1+a}{a}W_{0,a+1}(z)+\dfrac{1-a}{a}W_{0,a-1}(z)\right\}
%\end{equation}
%\end{lemma}

%+-----------------------------------------------------------------------------------------------+%
\bibliographystyle{spmpsci}
\bibliography{main,finance,physics,special-functions,stochastic-processes}

\end{document}